\documentclass[copyright,creativecommons]{eptcs}

\usepackage{iftex}

\ifpdf
  \usepackage{underscore}         %
  \usepackage[T1]{fontenc}        %
\else
  \usepackage{breakurl}           %
\fi

\usepackage{ifthen}

\usepackage{cite}

\usepackage{hyperref}
\usepackage{nameref}

\usepackage{amsmath}
\usepackage{amsfonts}
\usepackage{amssymb}
\usepackage{amsthm}

\usepackage{cleveref}
\theoremstyle{plain}
\newtheorem{theorem}{Theorem}
\crefname{theorem}{Thm.}{Thms.}
\newtheorem{proposition}[theorem]{Proposition}
\crefname{proposition}{Prop.}{Props.}
\newtheorem{lemma}[theorem]{Lemma}
\crefname{lemma}{Lem.}{Lems.}
\newtheorem*{sublemma*}{Sublemma}
\newtheorem{corollary}[theorem]{Corollary}
\crefname{corollary}{Cor.}{Cors.}

\theoremstyle{definition}
\newtheorem{definition}[theorem]{Definition}
\crefname{definition}{Def.}{Defs.}

\theoremstyle{remark}

\newtheorem{remark}[theorem]{Remark}

\crefname{figure}{Fig.}{Figs.}

\crefname{section}{Sect.}{Sects.}
\crefname{appendix}{Appendix}{}

\usepackage{tikz}
\usetikzlibrary{shapes, arrows, arrows.spaced, arrows.meta, chains,positioning, cd,
    decorations, decorations.pathmorphing, decorations.markings, decorations.pathreplacing, automata, backgrounds, petri, matrix, fit, calc, graphs, quotes, bending}

\usepackage{ebproof}
\ebproofnewstyle{small}{separation = .5em, label separation= .1em, right label template= \footnotesize \inserttext}
\usepackage{stackengine}

\usepackage{empheq}

\usepackage{textcomp}
\usepackage{stmaryrd} %
\usepackage{stackengine} %

\usepackage{tablefootnote}
\usepackage{multirow}
\usepackage{diagbox}

\usepackage{algorithm, algpseudocode}

\usepackage{pict2e,picture}
\usepackage{xparse}
\usepackage{accents}
\usepackage{xcolor}
\definecolor[named]{ACMBlue}{cmyk}{1,0.1,0,0.1}
\definecolor[named]{ACMYellow}{cmyk}{0,0.16,1,0}
\definecolor[named]{ACMOrange}{cmyk}{0,0.42,1,0.01}
\definecolor[named]{ACMRed}{cmyk}{0,0.90,0.86,0}
\definecolor[named]{ACMLightBlue}{cmyk}{0.49,0.01,0,0}
\definecolor[named]{ACMGreen}{cmyk}{0.20,0,1,0.19}
\definecolor[named]{ACMPurple}{cmyk}{0.55,1,0,0.15}
\definecolor[named]{ACMDarkBlue}{cmyk}{1,0.58,0,0.21}
\hypersetup{colorlinks,
    linkcolor=ACMPurple,
    citecolor=ACMPurple,
    urlcolor=ACMDarkBlue,
    filecolor=ACMDarkBlue}

\usepackage[normalem]{ulem}

\usepackage{enumitem}
\makeatletter
\let\orgdescriptionlabel\descriptionlabel
\renewcommand*{\descriptionlabel}[1]{%
    \let\orglabel\label
    \let\label\@gobble
    \phantomsection
    \edef\@currentlabel{#1\unskip}%
    \let\label\orglabel
    \orgdescriptionlabel{#1}%
}
\makeatother

\usepackage{etoolbox}%
\patchcmd{\footnotemark}{\stepcounter{footnote}}{\refstepcounter{footnote}}{}{}

\usepackage[notion]{knowledge}

\usepackage{snotez} %
\newcommand{\set}[1]{\{ #1 \}}
\newcommand{\tuple}[1]{\langle #1 \rangle}
\newcommand{\const}[1]{\mathsf{#1}}
\newcommand{\bl}{\_}
\newcommand{\defeq}{\mathrel{\ensurestackMath{\stackon[1pt]{=}{\scriptscriptstyle\Delta}}}}
\newcommand{\defiff}{\mathrel{\ensurestackMath{\stackon[1pt]{\iff}{\scriptscriptstyle\Delta}}}}

\newcommand{\nat}{\mathbb{N}}
\newcommand{\card}{\mathop{\#}}

\NewDocumentCommand\word{O{1}}{%
    \ifcase#1
        undefined
    \or w
    \or v
    \else undefined

    \fi
}
\newcommand{\len}[1]{\|#1\|}
\NewDocumentCommand\la{O{1}}{%
    \ifcase#1
        undefined
    \or L
    \or K
    \else undefined

    \fi
}

\newcommand{\vsig}{\mathbf{V}}

\newcommand{\Term}{\mathbf{T}}
\NewDocumentCommand\term{O{1}}{%
    \ifcase#1
        undefined
    \or t
    \or s
    \or u
    \else undefined

    \fi
}

\newcommand{\sig}{S}
\newcommand{\union}{\mathbin{\cup}}
\newcommand{\compo}{\mathbin{\cdot}}
\newcommand{\kstar}{*}
\newcommand{\emp}{\bot}
\newcommand{\eps}{\const{I}}
\newcommand{\compl}{-}

\newcommand{\domain}[1]{|#1|}
\newcommand{\val}{\mathfrak{v}}
\newcommand{\lang}[1]{[ #1 ]}
\newcommand{\LANG}{\mathsf{LANG}}

\newcommand{\LENGTH}[1]{\mathsf{L}_{#1}}

\knowledge{ignore}
| letter
| letters

\knowledge{ignore}
| word
| words

\knowledge{ignore}
| length

\knowledge{ignore}
| language
| languages

\knowledge{notion}
| language of a term

\knowledge{ignore}
| composition

\knowledge{ignore}
| Kleene star

\knowledge{notion}
| $\sig$-algebra
| $\sig$-algebras

\knowledge{notion}
| equation
| equations

\knowledge{notion}
| inequation
| inequations

\knowledge{notion}
| equational theory w.r.t.\ languages

\knowledge{notion}
| valuation
| valuations

\knowledge{notion}
| language model
| language models

\knowledge{notion}
| language valuation
| language valuations
| (language) valuations

\knowledge{notion}
| composition-free

\knowledge{notion}
| star-free

\knowledge{notion}
| words-to-letters valuation
| words-to-letters valuations
| Words-to-letters valuations

\knowledge{ignore}
| variable
| variables

\knowledge{notion}
| KA term
| KA terms
 
\usepackage{environ}
\newboolean{isdraft}

\setboolean{isdraft}{true}

\ifthenelse{\boolean{isdraft}}{
  \setlength{\marginparwidth}{7em}
  \setlength{\marginparsep}{0.1em}
  \NewEnviron{commentyn}{\sidenote{\textcolor{green}{YN: \BODY}}}
}{
  \NewEnviron{commentyn}{}
}

\title{Words-to-Letters Valuations for Language Kleene Algebras with Variable Complements}
\author{Yoshiki Nakamura
\institute{Tokyo Institute of Technology, Japan}
\email{nakamura.yoshiki.ny@gmail.com}
\and
Ryoma Sin'ya
\institute{Akita University, Japan}
\email{ryoma@math.akita-u.ac.jp}
}

\begin{document}
\maketitle

\begin{abstract}
We investigate the equational theory of Kleene algebra terms with \emph{variable complements}---(language) complement where it applies only to variables---w.r.t.\ languages.
While the equational theory w.r.t.\ languages coincides with the language equivalence (under the standard language valuation) for Kleene algebra terms, this coincidence is broken if we extend the terms with complements.
In this paper, we prove the decidability of some fragments of the equational theory: the universality problem is coNP-complete, and the inequational theory $t \le s$ is coNP-complete when $t$ does not contain Kleene-star.
To this end, we introduce \emph{words-to-letters valuations};
they are sufficient valuations for the equational theory and ease us in investigating the equational theory w.r.t.\ languages.
Additionally, we prove that for words with variable complements, the equational theory coincides with the word equivalence.
 \end{abstract}

\section{Introduction}
Kleene algebra (KA) \cite{kleeneRepresentationEventsNerve1951, Conway1971} is an algebraic system for regular expressions consisting of union ($\union$), composition ($\compo$), Kleene-star ($\bl^{\kstar}$), emptiness ($\emp$), and identity ($\eps$).
In this paper, we consider KAs \emph{w.r.t.\ languages} (a.k.a., \kl{language models} of KAs, language KAs).
Interestingly, the equational theory of KAs w.r.t.\ languages coincides with the language equivalence under the standard language valuation (see also, e.g., \cite{andrekaEquationalTheoryKleene2011,pousCompletenessTheoremsKleene2022}):
for all \kl{KA terms} (i.e., regular expressions) $\term[1], \term[2]$, we have
\begin{align*}
  \label{equation: LANG = lang}\LANG \models \term[1] = \term[2] \quad \iff \quad \lang{\term[1]} = \lang{\term[2]} \tag{$\dagger$}.
\end{align*}
Here, we write $\LANG \models \term[1] = \term[2]$ if the equation $\term[1] = \term[2]$ holds for all \kl{language models} (i.e., each variable $x$ maps to not only the singleton language $\set{x}$ but also any languages);
we write $\lang{\term[3]}$ for the language of a regular expression $\term[3]$ (i.e., each variable $x$ maps to the singleton language $\set{x}$).
Since the valuation $\lang{\bl}$ is an instance of valuations in $\LANG$, the direction $\Longrightarrow$ is trivial (this direction always holds even if we extend KA terms with some extra operators).
The direction $\Longleftarrow$ is a consequence of the completeness of KAs (see \Cref{section: LANG and lang} for an alternative proof not relying on the completeness of KAs).
However, the direction $\Longleftarrow$ fails in general when we extend KA terms with extra operators.
Namely, the equational theory w.r.t.\ languages does not coincide with the language equivalence in general (see below for complements).
The equational theory of KAs with some operators w.r.t.\ languages was studied,
e.g., with reverse \cite{bloomNotesEquationalTheories1995},
with tests \cite{kozenKleeneAlgebraTests1996} (where languages are of guarded strings, not words),
with intersection \cite{andrekaEquationalTheoryKleene2011},
and with universality ($\top$) \cite{pousCompletenessTheoremsKleene2022}.
Nevertheless, to the best of authors' knowledge, \emph{variable complements} (and even complements) w.r.t. languages has not yet been investigated,
while those w.r.t.\ binary relations were studied, e.g.,\ in \cite{ngRelationAlgebrasTransitive1984} (for complements; cf.\ Tarski's calculus of relations \cite{Tarski1941}) and \cite{nakamuraExistentialCalculiRelations2023} (for variable complements).

In this paper, we investigate the equational theory of KA terms with \emph{variable complements} ($x^{\compl}$)---(language) complement, where it applies only to variables (we use $x$ to denote variables)---w.r.t.\ languages.
For KA terms with variable complements, (\ref{equation: LANG = lang}) fails.
The following is an example:
\begin{align*}
  \LANG & \not\models x^{\compl} = x^{\compl} \compo x^{\compl} & \lang{x^{\compl}} & = \lang{x^{\compl} \compo x^{\compl}}.
\end{align*}
(For $\LANG \not\models$, consider a valuation such that $x^{\compl}$ maps to the language $\set{a}$.)
As the example above (see also \Cref{remark: LANG and lang}, for more examples) shows, the equational theory of KAs with variable complements w.r.t.\ languages significantly differs from the language equivalence under the standard language valuation.
While the language equivalence problem of KA terms with variable complements is decidable in PSPACE by a standard automata construction \cite{thompsonProgrammingTechniquesRegular1968} (and hence, PSPACE-complete \cite{meyerEquivalenceProblemRegular1972, Meyer1973, Hunt1976}),
it remains whether the equational theory w.r.t.\ languages is decidable.

We prove the decidability and complexity of some fragments of the equational theory of KA terms with variable complements w.r.t.\ languages:
the universality problem is coNP-complete (\Cref{corollary: universality decidable}), and the inequational theory $\term[1] \le \term[2]$ is coNP-complete when $\term[1]$ does not contain Kleene-star (\Cref{corollary: star-free}).
To this end, we introduce \emph{\kl{words-to-letters valuations}}.
\kl{Words-to-letters valuations} are sufficient for the equational theory of KA terms with variable complements w.r.t.\ languages (words-to-letters-valuation property; \Cref{corollary: word witness}):
for all terms $\term[1], \term[2]$, if there is some \kl{language valuation} such that it refutes $\term[1] = \term[2]$, there is some \kl{words-to-letters valuation} such that it refutes $\term[1] = \term[2]$.
This property eases us in investigating the equational theory w.r.t.\ languages.

Additionally, by using \kl{words-to-letters valuations}, we prove a completeness theorem for words with variable complements: the equational theory coincides with the word equivalence (\Cref{theorem: completeness word}).
A limitation of \kl{words-to-letters valuations} is that the number of \kl{letters} is not bounded; so, they cannot apply to the equational theory over $\LANG_{n}$ (language models over sets of cardinality at most a natural number $n$).
Nevertheless, by giving another valuation, we can show the coincidence for one-variable words (\Cref{theorem: completeness word one variable}).
We leave open for the many-variable words.

\subsection*{Outline}
In \Cref{section: preliminaries}, we briefly give basic definitions, including the syntax and semantics of KA terms with variable complements.
Additionally, we give languages for KA terms with variable complements (\Cref{section: term to lang}).
In \Cref{section: identity,section: univ,section: word}, we consider fragments of the equational theory of KA terms with variable complements w.r.t.\ languages, step-by-step.
In \Cref{section: identity}, we consider the identity inclusion problem ($\LANG \models \eps \le \term$?).
This problem is relatively easy but contains the coNP-hardness result (\Cref{corollary: identity decidable}).
In \Cref{section: univ}, we consider the variable inclusion problem ($\LANG \models x \le \term$?) and the universality problem ($\LANG \models \top \le \term$?).
In \Cref{section: word}, we consider the word inclusion problem ($\LANG \models \word \le \term$?).
This section proceeds in the same way as \Cref{section: univ}, thanks to \kl{words-to-letters valuations} (\Cref{definition: val for words to letters}).
Consequently, the inequational theory $\term[1] \le \term[2]$ is coNP-complete when $\term[1]$ does not contain Kleene-star (\Cref{corollary: star-free}).
Additionally, we prove the words-to-letters valuation property (\Cref{corollary: word witness}) for the equational theory of (full) KA terms with variable complements w.r.t.\ languages.
In \Cref{section: completeness word}, we consider the equational theory of \kl{words} with variable complements and show a completeness theorem (\Cref{theorem: completeness word}).
\Cref{section: conclusion} concludes this paper.

\section{Preliminaries}\label{section: preliminaries}
We write $\nat$ for the set of non-negative integers.
For a set $X$, we write $\card(X)$ for the cardinality of $X$ and $\wp(X)$ for the power set of $X$.

For a set $X$ (of \intro*\kl{letters}), we write $X^{\kstar}$ for the set of \intro*\kl{words} over $X$: finite sequences of elements of $X$.
We write $\eps$ for the empty word.
We write $\word[1] \word[2]$ for the concatenation of \kl{words} $\word[1]$ and $\word[2]$.
A \intro*\kl{language} over $X$ is a subset of $X^{\kstar}$.
We use $\word[1], \word[2]$ to denote \kl{words}
and use $\la[1], \la[2]$ to denote \kl{languages}, respectively.
For \kl{languages} $\la[1], \la[2] \subseteq X^{\kstar}$, the \intro*\kl{composition} $\la[1] \compo \la[2]$ and the \intro*\kl{Kleene star} $\la[1]^{\kstar}$ is defined by:
\begin{align*}
    \la[1] \compo \la[2] & \quad\defeq\quad \set{\word[1] \word[2] \mid \word[1] \in \la[1] \ \land \ \word[1] \in \la[2]};                        \\
    \la[1]^{\kstar}      & \quad\defeq\quad \set{\word[1]_0 \dots \word[1]_{n-1} \mid \exists n \in \nat, \forall i < n,\  \word[1]_i \in \la[1]}.
\end{align*}

\subsection{Syntax: KA terms with variable complements}
Let $\vsig$ be a set of \intro*\kl{variables}.
The set of Kleene algebra (KA) terms with variable complements ($x^{\compl}$) is defined by the following grammar:
\begin{align*}
    \Term \ni \term[1], \term[2], \term[3] \quad\Coloneqq\quad x \mid \eps \mid \emp \mid \term[1] \compo \term[2]  \mid \term[1] \union \term[2]\mid \term[1]^{\kstar} \mid x^{\compl} \tag{$x \in \vsig$}
\end{align*}
We use parentheses in ambiguous situations.
We often abbreviate $\term[1] \compo \term[2]$ to $\term[1] \term[2]$.
We write $\top$ for the term $x \union x^{\compl}$, where $x$ is any \kl{variable}.

An \intro*\kl{equation} $\term[1] = \term[2]$ is a pair of terms.
An \intro*\kl{inequation} $\term[1] \le \term[2]$ is an abbreviation of the \kl{equation} $\term[1] \union \term[2] = \term[2]$.
\subsection{Semantics: language models}
Consider the signature $\sig \defeq \set{\eps_{(0)}, \emp_{(0)}, \compo_{(2)}, \union_{(2)}, {\bl^{\kstar}}_{(1)}, {\bl^{\compl}}_{(1)}}$.
An \intro*\kl{$\sig$-algebra} $A$ is a tuple $\tuple{\domain{A}, \set{f^{A}}_{f_{(k)} \in \sig}}$, where $\domain{A}$ is a non-empty set and $f^{A} \colon \domain{A}^{k} \to \domain{A}$ is a $k$-ary map for each $f_{(k)} \in \sig$.
A \intro*\kl{valuation} $\val$ on an \kl{$\sig$-algebra} $A$ is a map $\val \colon \vsig \to \domain{A}$.
For a \kl{valuation} $\val$, we write $\hat{\val} \colon \Term \to \domain{A}$ for the unique homomorphism extending $\val$.

The \intro*\kl{language model} $A$ over a set $X$ is an \kl{$\sig$-algebra} such that $\domain{A} = \wp(X^{\kstar})$ and
for all $\la[1], \la[2] \subseteq X^{\kstar}$,
\begin{flalign*}
    \phantom{x^{A}} & \phantom{= \set{x}} & \eps^{A}                 & = \set{\eps}                   &
                    &                     & \la[1] \compo^{A} \la[2] & = \la[1] \compo \la[2]         &
                    &                     & \la[1]^{\kstar^{A}}      & = \la[1]^{\kstar}                \\
                    &                     & \emp^{A}                 & = \emptyset                    &
                    &                     & \la[1] \union^{A} \la[2] & = \la[1] \union \la[2]         &
                    &                     & \la[1]^{\compl^{A}}      & = X^{\kstar} \setminus \la[1].
\end{flalign*}
We write $\LANG$ for the class of all \kl{language models}.
A \intro*\kl{language valuation} over a set $X$ is a \kl{valuation} on some language model over $X$.
For an \kl{equation} $\term[1] = \term[2]$, we let
\[\LANG \models \term[1] = \term[2] \quad \defiff \quad \hat{\val}(\term[1]) = \hat{\val}(\term[2]) \mbox{ holds for all \kl{language valuations} $\val$}.\]
The \intro*\kl{equational theory w.r.t.\ languages} is the set of all \kl{equations} $\term[1] = \term[2]$ such that $\LANG \models \term[1] = \term[2]$.

Additionally, the \intro*\kl[language of a term]{language} $\lang{\term} \subseteq \vsig^{\kstar}$ of a term $\term$ is defined by:
\begin{flalign*}
    \lang{x} & \defeq \set{x} & \lang{\eps}                     & \defeq \set{\eps}                                &
             &                & \lang{\term[1] \compo \term[2]} & \defeq \lang{\term[1]} \compo \lang{\term[2]}    &
             &                & \lang{\term[1]^{\kstar}}        & \defeq \lang{\term[1]}^{\kstar}                    \\
             &                & \lang{\emp}                     & \defeq \emptyset                                 &
             &                & \lang{\term[1] \union \term[2]} & \defeq \lang{\term[1]} \union \lang{\term[2]}    &
             &                & \lang{\term[1]^{\compl}}        & \defeq \vsig^{\kstar} \setminus \lang{\term[1]}.
\end{flalign*}
By definition, we have $\lang{\term} = \hat{\val}(\term)$ if $\val$ is the \kl{valuation} on the \kl{language model} over the set $\vsig$ defined by $\val(x) = \set{x}$ for all $x \in \vsig$.
Hence, for all $\term[1], \term[2]$, we have
\begin{align*}
    \label{equation: LANG to lang} \LANG \models \term[1] = \term[2] \quad \Longrightarrow \quad \lang{\term[1]} = \lang{\term[2]}. \tag{$\ddagger$}
\end{align*}
\begin{remark}\label{remark: LANG and lang}
    The converse direction fails\footnote{The failure can be also shown by that the universality $\top$ can be expressed by $x \union x^{\compl}$; see also \cite[Remark.\ 3.6]{pousCompletenessTheoremsKleene2022}.}; for example, when $x \neq y$,
    \begin{align*}
        \LANG \not\models y \le x^{\compl} &  & \lang{y} \subseteq \lang{x^{\compl}}.
    \end{align*}
    Here $\term[1] \le \term[2]$ denotes the \kl{equation} $\term[1] \union \term[2] = \term[2]$ (so, indeed, an \kl{equation}).
    For $\LANG \not\models y \le x^{\compl}$, consider a \kl{language valuation} $\val$ such that $a \in \val(x)$ and $a \in \val(y)$; then we have $a \in \hat{\val}(y) \setminus \hat{\val}(x^{\compl})$.
    $\lang{y} \subseteq \lang{x^{\compl}}$ is shown by $\lang{y} = \set{y} \subseteq \vsig^{\kstar} \setminus \set{x} = \lang{x^{\compl}}$.
    More generally, for any \kl{word} $\word$ over $\vsig$ such that $\word \neq x$,
    \begin{align*}
        \LANG \not\models \word \le x^{\compl} &  & \lang{\word} \subseteq \lang{x^{\compl}}.
    \end{align*}
    Moreover, for example, there are the following examples
    (for $\LANG \not\models$, consider a valuation such that both $x$ and $y$ map to the language $X^* \setminus \set{a}$, where $X$ is a set and $a \in X$):
    \begin{align*}
        \LANG & \not\models x^{\compl} = x^{\compl} \compo x^{\compl} & \lang{x^{\compl}} & = \lang{x^{\compl} \compo x^{\compl}}  \\
        \LANG & \not\models \top = x^{\compl} \compo y^{\compl}       & \lang{\top}       & = \lang{x^{\compl} \compo y^{\compl}}  \\
        \LANG & \not\models \top = x^{\compl} \union y^{\compl}       & \lang{\top}       & = \lang{x^{\compl} \union y^{\compl}}.
    \end{align*}

    As the examples above show, for KA terms with variable complements, the \kl{equational theory w.r.t.\ languages} ($\LANG \models \term[1] = \term[2]$?) significantly differs from the language equivalence problem ($\lang{\term[1]} = \lang{\term[2]}$?).
\end{remark}
In the sequel, we focus on the \kl{equational theory w.r.t.\ languages} and investigate its fragments.
We prepare a useful tool (\Cref{lemma: lang val}), which enables us to decompose terms into languages of words.

\subsection{Languages for KA terms with variable complements}\label{section: term to lang}
Let $\vsig' = \set{x, x^{\compl} \mid x \in \vsig}$.
For a term $\term$, we write $\lang{\term}_{\vsig'}$ for the \kl[language of a term]{language} of $\term$ where $\term$ is viewed as the regular expression over $\vsig'$.
Each \kl{word} over $\vsig'$ is a term such that both the union ($\union$) and the Kleene-star ($\bl^{\kstar}$) do not occur.
Note that $\lang{x^{\compl}}_{\vsig'} = \set{x^{\compl}}$, cf.\ $\lang{x^{\compl}} = \vsig^* \setminus \set{x}$.
For a \kl{language valuation} $\val$ and a \kl{language} $\la$ over $\vsig'$, we define
\[\hat{\val}(\la) \quad\defeq\quad \bigcup_{\word \in \la} \hat{\val}(\word). \]
By using the distributive law of $\compo$ w.r.t.\ $\union$, for all \kl{languages} $\la[1], \la[2]$ and \kl{language valuations} $\val$,
we have:
\begin{align*}
    \hat{\val}(\la[1] \union \la[2]) & = \hat{\val}(\la[1]) \union \hat{\val}(\la[2]) &
    \hat{\val}(\la[1] \compo \la[2]) & = \hat{\val}(\la[1]) \compo \hat{\val}(\la[2]) &
    \hat{\val}(\la[1]^{\kstar})      & = \hat{\val}(\la[1])^{\kstar}.
\end{align*}
We can decompose each term $\term$ to the set $\lang{\term}_{\vsig'}$ of \kl{words} over $\vsig'$ as follows:
\begin{lemma}\label{lemma: lang val}
    Let $\val$ be a \kl{language valuation}.
    For all terms $\term$, we have
    \[\hat{\val}(\term) \quad=\quad \hat{\val}(\lang{\term}_{\vsig'}).\]
\end{lemma}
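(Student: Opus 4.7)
The plan is to proceed by a straightforward structural induction on the term $\term$, using the three distributivity identities for $\hat{\val}$ on languages over $\vsig'$ that are displayed immediately before the statement.

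For the base cases, I would just unfold $\lang{\term}_{\vsig'}$. If $\term \in \set{x, x^{\compl}, \eps}$, then $\lang{\term}_{\vsig'} = \set{\term}$, a single-word language over $\vsig'$, and so $\hat{\val}(\lang{\term}_{\vsig'}) = \bigcup_{\word \in \set{\term}} \hat{\val}(\word) = \hat{\val}(\term)$ directly from the definition $\hat{\val}(\la) = \bigcup_{\word \in \la} \hat{\val}(\word)$. The case $\term = \emp$ is similar, with $\lang{\emp}_{\vsig'} = \emptyset$ and $\hat{\val}(\emptyset) = \emptyset = \hat{\val}(\emp)$. The one point worth spelling out is $\term = x^{\compl}$: here $x^{\compl}$ is treated on the right-hand side as a single letter of $\vsig'$, and $\hat{\val}$ applied to it as a term yields $\hat{\val}(x^{\compl}) = \domain{A} \setminus \val(x)$ via the unique-homomorphism extension.

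For the inductive step, each of $\union$, $\compo$, and $\bl^{\kstar}$ commutes both with $\lang{\bl}_{\vsig'}$ (by the definition of the language of a regular expression, viewed over $\vsig'$) and with the extended $\hat{\val}(\bl)$ on languages (by the three identities just before the lemma). For example,
\[
\hat{\val}(\lang{\term[1]\compo \term[2]}_{\vsig'}) = \hat{\val}(\lang{\term[1]}_{\vsig'} \compo \lang{\term[2]}_{\vsig'}) = \hat{\val}(\lang{\term[1]}_{\vsig'}) \compo \hat{\val}(\lang{\term[2]}_{\vsig'}) \stackrel{\text{IH}}{=} \hat{\val}(\term[1]) \compo \hat{\val}(\term[2]) = \hat{\val}(\term[1]\compo\term[2]),
\]
and the $\union$ and $\bl^{\kstar}$ cases are completely analogous.

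There is no substantive obstacle: the real content lies in the three distributivity identities themselves (which come from distributivity of $\compo$ over $\union$ and continuity of $\compo$ and $\bl^{\kstar}$), and these are already granted before the lemma. Note in particular that the induction does \emph{not} need a case for $\term[1]^{\compl}$, because variable complements are built into the base case $x^{\compl}$ rather than formed by a unary operator applied to arbitrary terms.
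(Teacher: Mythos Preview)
Your proposal is correct and matches the paper's own proof essentially step for step: a structural induction on $\term$ with base cases $x, x^{\compl}, \eps, \emp$ handled by unfolding $\lang{\term}_{\vsig'}$, and inductive cases $\union, \compo, \bl^{\kstar}$ handled via the three distributivity identities just before the lemma (the paper also spells out the $\compo$ case as the example). Your extra remark that no inductive case for $\bl^{\compl}$ is needed because $x^{\compl}$ is a base case is a helpful clarification the paper leaves implicit.
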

\begin{proof}
    By easy induction on $\term$ using the equations above.
    Case $\term = x, x^{\compl}, \eps$:
    Clear, by $\lang{\term}_{\vsig'} = \set{\term}$.
    Case $\term = \bot$:
    By $\hat{\val}(\bot) = \emptyset = \hat{\val}(\lang{\bot}_{\vsig'})$.
    Case $\term = \term[2] \union \term[3]$, Case $\term = \term[2] \compo \term[3]$, Case $\term = \term[2]^{\kstar}$:
    By IH with the equations above.
    For example, when $\term = \term[2] \compo \term[3]$, we have
    \begin{align*}
        \hat{\val}(\term[2] \compo \term[3])  = \hat{\val}(\term[2]) \compo \hat{\val}(\term[3]) & = \hat{\val}(\lang{\term[2]}_{\vsig'}) \compo \hat{\val}(\lang{\term[3]}_{\vsig'}) \tag{IH}                                                    \\
                                                                                                 & = \hat{\val}(\lang{\term[2]}_{\vsig'} \compo \lang{\term[3]}_{\vsig'}) = \hat{\val}(\lang{\term[2] \compo \term[3]}_{\vsig'}). \tag*{\qedhere}
    \end{align*}
\end{proof}

\section{The identity inclusion problem}\label{section: identity}
We first consider the \emph{identity inclusion problem} w.r.t.\ languages:
\begin{center}
    Given a term $\term$, does $\LANG \models \eps \le \term$?
\end{center}
This problem is relatively easily solvable.
Since $\LANG \models \eps \le \term$ iff $\eps \in \hat{\val}(\term)$ for all \kl{language valuation} $\val$,
it suffices to consider the membership of the empty word $\eps$.
We use the following facts.
\begin{proposition}\label{proposition: identity}
    For all \kl{languages} $\la[1], \la[2]$, we have:
    \begin{align*}
        \eps \in \la[1] \union \la[2] & \quad\iff\quad \eps \in \la[1] \lor \eps \in \la[2]  \\
        \eps \in \la[1] \compo \la[2] & \quad\iff\quad \eps \in \la[1] \land \eps \in \la[2] \\
        \eps \in \la[1]^{\kstar}      & \quad\iff\quad \const{True}.
    \end{align*}
\end{proposition}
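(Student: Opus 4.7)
The plan is to establish each of the three equivalences by unfolding the definitions of $\union$, $\compo$, and ${}^{\kstar}$ given in \Cref{section: preliminaries}. All three are routine, so I would present them concisely in a single paragraph rather than as a formal induction.

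For the first equivalence, I would observe that by the very definition of set-theoretic union, $\word \in \la[1] \union \la[2]$ iff $\word \in \la[1]$ or $\word \in \la[2]$; instantiating $\word \defeq \eps$ gives the claim immediately. For the second equivalence, the key observation is the \emph{uniqueness of the empty factorization}: for \kl{words} $\word[1], \word[2]$, we have $\word[1] \word[2] = \eps$ iff $\word[1] = \eps$ and $\word[2] = \eps$ (since the \kl{length} of a concatenation is the sum of the \kl{lengths}, and $0 = \len{\word[1]} + \len{\word[2]}$ forces $\len{\word[1]} = \len{\word[2]} = 0$). Combined with the definition $\la[1] \compo \la[2] = \set{\word[1]\word[2] \mid \word[1] \in \la[1], \word[2] \in \la[2]}$, this gives $\eps \in \la[1] \compo \la[2]$ iff there exist $\word[1] \in \la[1], \word[2] \in \la[2]$ with $\word[1]\word[2] = \eps$, iff $\eps \in \la[1]$ and $\eps \in \la[2]$. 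For the third equivalence, I would simply take $n = 0$ in the definition $\la[1]^{\kstar} = \set{\word[1]_0 \dots \word[1]_{n-1} \mid n \in \nat, \forall i<n, \word[1]_i \in \la[1]}$; the empty product of \kl{words} is $\eps$, and the universal quantifier is vacuously satisfied, so $\eps \in \la[1]^{\kstar}$ always.

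There is no real obstacle here: the whole statement is a direct consequence of the semantic definitions, intended as a lemma that will be reused later in \Cref{section: identity} for reducing the identity inclusion problem ($\LANG \models \eps \le \term$?) to a syntactic check on $\term$. The only subtle point worth highlighting is the factorization fact used in the second case, which in slightly different form underlies most reasoning about $\eps$-membership in regular languages.
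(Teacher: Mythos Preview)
Your proof is correct and takes the same approach as the paper, which simply writes ``Clear, by definition.'' You have merely unpacked that phrase into explicit reasoning about unions, the unique factorization of $\eps$, and the $n=0$ witness for Kleene star; all of this is exactly what the paper's one-line dismissal is gesturing at.
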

\begin{proof}
    Clear, by definition.
\end{proof}
\begin{lemma}\label{lemma: identity val}
    Let $\val, \val'$ be \kl{language valuations}.
    Assume that for all \kl{variables} $x$, $\eps \in \val(x)$ iff $\eps \in \val'(x)$.
    For all terms $\term$,
    \[\eps \in \hat{\val}(\term) \quad \iff \quad \eps \in \hat{\val}'(\term).\]
\end{lemma}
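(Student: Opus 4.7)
The plan is to proceed by straightforward structural induction on $\term$, leveraging \Cref{proposition: identity} for the inductive cases involving $\union$, $\compo$, and $\bl^{\kstar}$. The hypothesis gives me exactly what I need at variables, and the other operators either respect $\eps$-membership compositionally (so the IH carries through directly) or do not depend on the valuation at all ($\eps$ and $\emp$).

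I would dispatch the base cases first. For $\term = x$, the equivalence is precisely the assumption. For $\term = \eps$, both $\hat{\val}(\eps)$ and $\hat{\val}'(\eps)$ equal $\set{\eps}$, so both sides are true. For $\term = \emp$, both interpretations are $\emptyset$, so both sides are false. The case $\term = x^{\compl}$ is the only mildly tricky base case: I have $\eps \in \hat{\val}(x^{\compl})$ iff $\eps \notin \val(x)$ (unfolding the complement in whichever ambient $X^{\kstar}$ the valuation lives in, noting $\eps \in X^{\kstar}$ always), and analogously for $\val'$. Then the contrapositive of the hypothesis closes this case. Note that it is essential here that complement only applies to variables; an unrestricted complement $\term[1]^{\compl}$ would not admit this argument in the way the inductive cases are structured.

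For the inductive step, each clause of \Cref{proposition: identity} reduces $\eps$-membership for a compound term to a boolean combination of $\eps$-memberships for its immediate subterms: union corresponds to disjunction, composition to conjunction, and Kleene-star is unconditionally true. Applying the IH to the subterms and then reassembling via the same boolean combination on the $\val'$ side yields the desired equivalence. For example, in the composition case, $\eps \in \hat{\val}(\term[2] \compo \term[3])$ iff $\eps \in \hat{\val}(\term[2]) \land \eps \in \hat{\val}(\term[3])$, which by IH is equivalent to $\eps \in \hat{\val}'(\term[2]) \land \eps \in \hat{\val}'(\term[3])$, i.e., $\eps \in \hat{\val}'(\term[2] \compo \term[3])$.

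I do not expect any real obstacle; the only point that deserves care is the $x^{\compl}$ case, where one must observe that $\eps$ belongs to every ambient $X^{\kstar}$ so that $\eps \in \hat{\val}(x^{\compl}) \iff \eps \notin \val(x)$ regardless of the underlying set $X$ chosen for the two (possibly different) language models. Once that is observed, the induction is essentially mechanical.
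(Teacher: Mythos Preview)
Your proposal is correct and follows essentially the same approach as the paper: a structural induction on $\term$, with the base cases handled directly from the hypothesis (and its contrapositive for $x^{\compl}$) and the inductive cases by \Cref{proposition: identity}. Your treatment is in fact more explicit than the paper's, particularly in spelling out why the $x^{\compl}$ case goes through regardless of the ambient alphabet.
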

\begin{proof}
    By easy induction on $\term$ using \Cref{proposition: identity}.
    Case $\term = x, x^{\compl}$:
    Clear by the assumption.
    Case $\term$ is a constant:
    Trivial.
    For inductive cases, e.g., Case $\term = \term[2] \union \term[3]$:
    By using \Cref{proposition: identity}, we have
    \begin{align*}
        \eps \in \hat{\val}(\term[2] \union \term[3]) \iff \eps \in \hat{\val}(\term[2]) \lor \eps \in \hat{\val}(\term[3]) & \iff \eps \in \hat{\val}'(\term[2]) \lor \eps \in \hat{\val}'(\term[3]) \tag{IH} \\
                                                                                                                            & \iff \eps \in \hat{\val}'(\term[2] \union \term[3]).
    \end{align*}
    (Similarly for the other inductive cases.)
\end{proof}
By using \Cref{lemma: identity val}, it suffices to consider a finite number of valuations, as follows.
\begin{theorem}\label{theorem: identity val}
    For all terms $\term$, the following are equivalent:
    \begin{enumerate}
        \item \label{theorem: identity val 1} $\LANG \models \eps \le \term$ (i.e.,\ $\hat{\val}(\eps) \subseteq \hat{\val}(\term)$, for all \kl{language valuations} $\val$);
        \item \label{theorem: identity val 2} $\hat{\val}(\eps) \subseteq \hat{\val}(\term)$, for all \kl{language valuations} $\val$ over the empty set s.t.\  for all $x$, $\val(x) \subseteq \set{\const{I}}$.
    \end{enumerate}
\end{theorem}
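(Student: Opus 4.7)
The plan is to prove (\ref{theorem: identity val 1}) $\Leftrightarrow$ (\ref{theorem: identity val 2}) by reducing the universal quantification over all language valuations to the finite family of valuations over the empty set described in (\ref{theorem: identity val 2}). The direction (\ref{theorem: identity val 1}) $\Rightarrow$ (\ref{theorem: identity val 2}) is immediate since (\ref{theorem: identity val 2}) just specializes the universal quantifier in (\ref{theorem: identity val 1}) to a subclass of language valuations.

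For the non-trivial direction (\ref{theorem: identity val 2}) $\Rightarrow$ (\ref{theorem: identity val 1}), I would fix an arbitrary language valuation $\val$ on a language model over some set $X$ and aim to prove $\eps \in \hat{\val}(\term)$. The key move is to define a companion valuation $\val'$ over the empty set by setting
\[
\val'(x) \defeq
\begin{cases} \set{\eps} & \text{if } \eps \in \val(x), \\ \emptyset & \text{otherwise,} \end{cases}
\]
for every variable $x$. By construction $\val'(x) \subseteq \set{\eps}$, so $\val'$ is precisely one of the valuations considered in (\ref{theorem: identity val 2}), and we have $\eps \in \val(x) \iff \eps \in \val'(x)$ for every $x$.

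Now I apply \Cref{lemma: identity val} to $\val$ and $\val'$: since they agree on membership of $\eps$ at every variable, they agree on membership of $\eps$ at every term, so $\eps \in \hat{\val}(\term) \iff \eps \in \hat{\val}'(\term)$. By (\ref{theorem: identity val 2}), $\eps \in \hat{\val}'(\term)$, and hence $\eps \in \hat{\val}(\term)$, i.e.\ $\hat{\val}(\eps) \subseteq \hat{\val}(\term)$. Since $\val$ was arbitrary, (\ref{theorem: identity val 1}) follows.

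There is no serious obstacle here; the entire theorem is a packaging of \Cref{lemma: identity val} together with the observation that the $\eps$-membership pattern of $\val$ across variables ranges over only $2^{|\vsig|}$ possibilities, each of which is realized by a valuation of the restricted form in (\ref{theorem: identity val 2}). The one point to be careful about is justifying that a valuation on a language model over the empty set (whose carrier is $\wp(\emptyset^{\kstar}) = \wp(\set{\eps}) = \set{\emptyset,\set{\eps}}$) is a legitimate language valuation in the sense of the paper, which it is by definition.
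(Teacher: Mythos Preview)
Your proof is correct and essentially identical to the paper's own proof: both construct the same companion valuation over the empty set (the paper writes it as $\val^{\tuple{}}(x) \defeq \set{\eps \mid \eps \in \val(x)}$, which is exactly your $\val'$) and both invoke \Cref{lemma: identity val}. The only cosmetic difference is that the paper argues the nontrivial direction by contraposition whereas you argue it directly.
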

\begin{proof}
    \ref{theorem: identity val 1}$\Rightarrow$\ref{theorem: identity val 2}:
    Trivial.
    \ref{theorem: identity val 2}$\Rightarrow$\ref{theorem: identity val 1}:
    We prove the contraposition.
    By $\LANG \not\models \eps \le \term$,
    there is a \kl{language valuation} $\val$ s.t.\ $\hat{\val}(\eps) \not\subseteq \hat{\val}(\term)$ (i.e., $\eps \not\in \hat{\val}(\term)$).
    Let $\val^{\tuple{}}$ be the \kl{language valuation} over the empty set defined by:
    \[\val^{\tuple{}}(x) \quad\defeq\quad \set{\eps \mid \eps \in \val(x)}.\]
    Then by \Cref{lemma: identity val}, $\eps \not\in \hat{\val}^{\tuple{}}(\term)$ holds; thus, we have $\hat{\val}^{\tuple{}}(\eps) \not\subseteq \hat{\val}^{\tuple{}}(\term)$.
\end{proof}

\begin{corollary}\label{corollary: identity decidable}
    The identity inclusion problem (given a term $\term$, does $\LANG \models \eps \le \term$?) is decidable and coNP-complete for KA terms with variable complements.
\end{corollary}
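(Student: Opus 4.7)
The plan is to derive this corollary directly from \Cref{theorem: identity val} together with \Cref{proposition: identity}. Membership in coNP is obtained by exhibiting a small witness: since by \Cref{theorem: identity val} we only need to refute $\LANG \models \eps \le \term$ using \kl{language valuations} $\val$ over the empty set with $\val(x) \subseteq \set{\eps}$, a counterexample can be described by one bit per \kl{variable} appearing in $\term$ (namely whether $\val(x) = \emptyset$ or $\val(x) = \set{\eps}$). So I would first argue that a nondeterministic verifier only needs to guess such a Boolean assignment, whose size is polynomial in $\term$.

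Second, I would show that given such a $\val$, the test $\eps \in \hat{\val}(\term)$ is computable in polynomial time. This is a straightforward bottom-up evaluation along the syntax tree of $\term$ using \Cref{proposition: identity}: the leaves $x, x^{\compl}, \eps, \emp$ are immediate (noting that over the empty set the complement flips $\emptyset$ and $\set{\eps}$), and the inductive clauses for $\union$, $\compo$, $\bl^{\kstar}$ are Boolean operations $\lor$, $\land$, $\const{True}$. Combined with the guess, this places the complement of the problem in NP, hence the problem itself is in coNP and in particular decidable.

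For coNP-hardness I would reduce from \textsc{Tautology} on propositional formulas in negation normal form over variables $x_1, \dots, x_n$. Translate a formula $\varphi$ to a term $\term_\varphi$ by $\land \mapsto \compo$, $\lor \mapsto \union$, $\neg x \mapsto x^{\compl}$, and $\const{True}, \const{False} \mapsto \eps, \emp$ respectively. By the same Boolean evaluation as above, for any Boolean assignment $\beta$ and the corresponding $\val_\beta$ with $\val_\beta(x) = \set{\eps}$ iff $\beta(x) = \const{True}$, one has $\eps \in \hat{\val}_\beta(\term_\varphi)$ iff $\beta \models \varphi$. Invoking \Cref{theorem: identity val}, $\LANG \models \eps \le \term_\varphi$ iff $\varphi$ is a tautology, giving the desired reduction.

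I do not expect any serious obstacle: \Cref{theorem: identity val} already reduces the semantic question to a finite, syntactically simple Boolean evaluation, so both directions of the complexity claim essentially amount to observing the equivalence between \kl{language valuations} of this restricted form and Boolean assignments. The only point to be careful about is the interpretation of $x^{\compl}$ over the empty alphabet (so that the universe $X^{\kstar}$ is $\set{\eps}$), which makes the complement on variables behave exactly like propositional negation.
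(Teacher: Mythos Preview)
Your proposal is correct and follows essentially the same approach as the paper: both use \Cref{theorem: identity val} to reduce to finitely many Boolean-type valuations for the coNP upper bound, and both reduce from propositional tautology for the lower bound via the translation $\land \mapsto \compo$, $\lor \mapsto \union$, negated literals $\mapsto x^{\compl}$. The only cosmetic difference is that the paper restricts to formulas in disjunctive normal form whereas you allow arbitrary negation normal form; both choices yield coNP-hardness for the same reason.
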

\begin{proof}
    (in coNP):
    \Cref{theorem: identity val} induces the following non-deterministic polynomial algorithm:
    \begin{enumerate}
        \item Pick up a \kl{language valuation} $\val$ over the empty set s.t.\  for all $x$, $\val(x) \subseteq \set{\const{I}}$, non-deterministically.
        \item If $\hat{\val}(\eps) \not\subseteq \hat{\val}(\term)$, then return $\const{True}$;
              otherwise return $\const{False}$.
    \end{enumerate}
    Then $\LANG \not\models \eps \le \term$ if some execution returns $\const{True}$;
    and $\LANG \models \eps \le \term$ otherwise.
    Hence, the identity inclusion problem is decidable in coNP (as its complemented problem is in NP).

    (coNP-hard):
    Because this problem subsumes the validity problem of propositional formulas in disjunctive normal form, which is a well-known coNP-complete problem \cite{cookComplexityTheoremprovingProcedures1971}.
    More precisely,
    given a propositional formula $\varphi$ in disjunctive normal form,
    let $\term$ be the term obtained from $\varphi$ by replacing each conjunction $\land$ with $\compo$ and each disjunction $\lor$ with $\union$ (where we map each positive literal $x$ to the variable $x$ and each negative literal $x^{-}$ to the complemented variable $x^{-}$); for example, if $\varphi = (x \land y^{-}) \lor (y \lor x^{-})$, then $\term = (x \compo y^{-}) \union (y \union x^{-})$.
    Then, for all \kl{language valuations} $\val$ (over the empty set s.t.\ for all $x, \val(x) \subseteq \set{\eps}$),
    we have: $\hat{\val}(\eps) \subseteq \hat{\val}(\term)$ holds iff $\varphi$ is $\const{True}$ on the valuation $\val'$,
    where $\val'$ is the map mapping each $x$ to $\const{True}$ if $\eps \in \val(x)$ and $\const{False}$ otherwise.
    Thus by \Cref{theorem: identity val}, $\LANG \models \eps \le \term$ iff $\varphi$ is valid.
    Hence, the identity inclusion problem is coNP-hard.
\end{proof}

\begin{remark}\label{remark: identity inclusion}
    Under the standard language valuation, the identity inclusion problem---given a term $\term$, does $\lang{\eps} \subseteq \lang{\term}$? (i.e., does $\eps \in \lang{\term}$?)---is decidable in P (because we can compute ``$\eps \in \lang{\term}$?'' by induction on $\term$, as $\eps \not\in \lang{x}$ and $\eps \in \lang{x^{\compl}}$ for every variable $x$).
    Hence, for KA terms with variable complements, the identity inclusion problem w.r.t.\ languages is strictly harder than that under the standard language valuation unless P = NP.
\end{remark}
\section{The variable inclusion problem and the universality problem}\label{section: univ}
Next, we consider the \emph{variable inclusion problem}:
\begin{center}
    Given a variable $x$ and a term $\term$, does $\LANG \models x \le \term$?
\end{center}
In the identity inclusion problem, if $\word \in \hat{\val}(\eps) \setminus \hat{\val}(\term)$, then $\word = \eps$ should hold; so it suffices to consider the membership of the empty word $\eps$.
However, in the variable inclusion problem, this situation changes: if $\word \in \hat{\val}(x) \setminus \hat{\val}(\term)$, then $\word$ is possibly any \kl{word}.

Nevertheless, we can overcome the problem above for KA terms with variable complements;
more precisely, from a \kl{language valuation} $\val$ s.t.\ $\word \in \hat{\val}(x) \setminus \hat{\val}(\term)$ for some \kl{word} $\word$,
we can construct a \kl{language valuation} $\val'$ s.t. $\ell \in \hat{\val}'(x) \setminus \hat{\val}'(\term)$ for some \emph{\kl{letter}} $\ell$.
If such $\val'$ can be constructed from $\val$,
then considering the membership of \kl{letters} suffices because we have
\begin{align*}
    \LANG \not\models x \le \term & \iff \word \in \hat{\val}(x) \setminus \hat{\val}(\term) \mbox{ for some \kl{language valuation} $\val$ and \kl{word} $\word$} \tag{By definition}                                                                                                  \\
                                  & \iff \ell \in \hat{\val}'(x) \setminus \hat{\val}'(\term) \mbox{ for some \kl{language valuation} $\val'$ and \emph{\kl{letter}} $\ell$} \tag{$\Longrightarrow$: By the condition of $\val'$. $\Longleftarrow$: Trivial by letting $\val = \val'$.}
\end{align*}
Such a \kl{language valuation} $\val'$ can be defined as follows:
\begin{definition}\label{definition: val for word to letter}
    For a \kl{language valuation} $\val$ over a set $X$ and a \kl{word} $\word$ over $X$,
    the \kl{language valuation} $\val^{\word}$ over the set $\set{\ell}$ (where $\ell$ is a \kl{letter}) is defined as follows:
    \[\val^{\word}(x) \quad\defeq\quad \set{\eps \mid \eps \in \val(x)} \union \set{\ell \mid \word \in \val(x)}.\]
\end{definition}
In the following, we prove that $\val^{\word}$ satisfies the condition of $\val'$ above, that is, the following conditions:
\begin{itemize}
    \item $\word \in \hat{\val}(x) \Longrightarrow \ell \in \hat{\val}^{\word}(x)$;
    \item $\word \not\in \hat{\val}(\term) \Longrightarrow \ell \not\in \hat{\val}^{\word}(\term)$.
\end{itemize}
The first condition is clear by the definition of $\val^{\word}$.
We prove the second condition in \Cref{lemma: ell abstraction}.
We prepare the following fact:
\begin{proposition}[cf.\ \Cref{proposition: identity}]\label{proposition: letter}
    For all \kl{languages} $\la[1], \la[2]$ and \kl{letters} $a$, we have:
    \begin{align*}
        a \in \la[1] \union \la[2] & \quad\iff\quad a \in \la[1] \lor a \in \la[2]                                                          \\
        a \in \la[1] \compo \la[2] & \quad\iff\quad (a \in \la[1] \land \const{I} \in \la[2]) \lor (\const{I} \in\la[1] \land a \in \la[2]) \\
        a \in \la[1]^{\kstar}      & \quad\iff\quad a \in \la[1].
    \end{align*}
\end{proposition}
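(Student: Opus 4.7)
The plan is a straightforward unfolding of the definitions of $\union$, $\compo$, and ${\bl}^{\kstar}$ given in the Preliminaries, exploiting the single fact that a \kl{letter} $a$ has length $1$.

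First I would dispose of the union case, which is immediate: $a \in \la[1] \union \la[2]$ iff $a \in \la[1]$ or $a \in \la[2]$ is just the definition of set union. For the \kl{composition} case, I would unfold $a \in \la[1] \compo \la[2]$ to: there exist $\word[1] \in \la[1]$ and $\word[2] \in \la[2]$ with $a = \word[1] \word[2]$. Because $a$ has length $1$, any such factorization forces exactly one of $\word[1], \word[2]$ to equal $a$ and the other to equal $\eps$. Splitting on which factor is $\eps$ yields the two disjuncts $(a \in \la[1] \land \eps \in \la[2])$ and $(\eps \in \la[1] \land a \in \la[2])$. The converse direction (from either disjunct back to $a \in \la[1] \compo \la[2]$) is immediate by taking the factorization $a \compo \eps$ or $\eps \compo a$.

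For the \kl{Kleene star} case, the backward direction is immediate since $\la[1] \subseteq \la[1]^{\kstar}$ (take $n = 1$ in the definition). For the forward direction, $a \in \la[1]^{\kstar}$ gives $a = \word[1]_0 \dots \word[1]_{n-1}$ with each $\word[1]_i \in \la[1]$; since the total length is $1$, exactly one $\word[1]_i$ equals $a$ and the rest equal $\eps$, so $a \in \la[1]$.

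I do not expect any serious obstacle — the only subtlety is the length argument (that no single \kl{letter} can be split into two non-empty pieces), which is a one-line observation. The proof can simply be written as ``Clear, by definition'' with perhaps a remark pointing to the length-$1$ argument for the composition and star cases.
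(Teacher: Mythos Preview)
Your proposal is correct and matches the paper's approach exactly: the paper's proof is literally ``Clear, by definition,'' and you have simply spelled out the length-$1$ factorization argument that underlies this. Your final remark that the proof can be written as ``Clear, by definition'' is precisely what the authors do.
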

\begin{proof}
    Clear, by definition.
\end{proof}
\begin{lemma}[cf.\ \Cref{lemma: identity val}]\label{lemma: ell abstraction}
    Let $\val$ be a \kl{language valuation} and $\word$ be a \kl{word}.
    For all terms $\term$, we have:
    \[\ell \in \hat{\val}^{\word}(\term) \quad \Longrightarrow \quad \word \in \hat{\val}(\term).\]
\end{lemma}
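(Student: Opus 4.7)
The plan is to prove the implication by structural induction on $\term$, following the same template as \Cref{lemma: identity val} but using \Cref{proposition: letter} in place of \Cref{proposition: identity} to decompose letter-membership through $\union$, $\compo$, and $\bl^{\kstar}$. The key observation that makes the induction go through is that $\val^{\word}$ was defined precisely so that its single available non-empty letter $\ell$ behaves, with respect to membership, like the word $\word$ does under $\val$.

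For the base cases, I would first note that $\ell \neq \eps$, so the premise is vacuously false on $\eps$ and $\emp$. For a variable $x$, the implication $\ell \in \val^{\word}(x) \Longrightarrow \word \in \val(x)$ is immediate from the definition of $\val^{\word}$. For $x^{\compl}$, if $\ell \in \hat{\val}^{\word}(x^{\compl}) = \{\ell\}^{\kstar} \setminus \val^{\word}(x)$, then $\ell \notin \val^{\word}(x)$, so by the contrapositive of the variable case we have $\word \notin \val(x)$, hence $\word \in \hat{\val}(x^{\compl})$. This is the only place where the complement-on-variables restriction really matters: we cannot, in general, pull a complement through an arbitrary compound term, but for variables the definition of $\val^{\word}$ makes the implication reversible.

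For the inductive cases I would apply \Cref{proposition: letter}. The union and Kleene-star cases follow immediately from the induction hypothesis. The composition case is the one that requires a small additional ingredient: if $\ell \in \hat{\val}^{\word}(\term[2] \compo \term[3])$, then \Cref{proposition: letter} gives two disjuncts, each mixing a letter-membership with an $\eps$-membership. For the $\eps$-membership parts I would invoke \Cref{lemma: identity val} after checking that $\val$ and $\val^{\word}$ agree on $\eps$-membership of variables, which is immediate from the definition $\val^{\word}(x) = \{\eps \mid \eps \in \val(x)\} \cup \{\ell \mid \word \in \val(x)\}$. Combining this with the induction hypothesis on the letter-membership side yields, in each disjunct, $\word \in \hat{\val}(\term[2]) \compo \hat{\val}(\term[3])$ via the decomposition $\word = \word \cdot \eps$ or $\word = \eps \cdot \word$.

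The main obstacle, if any, is the composition case, since it is the only place where one must simultaneously carry the letter-membership hypothesis and an auxiliary $\eps$-membership fact across the induction. Once one recognizes that \Cref{lemma: identity val} can be reused to handle the $\eps$-membership side, the proof reduces to a direct application of the two propositions, and the rest of the cases are routine.
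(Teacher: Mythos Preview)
Your approach is essentially identical to the paper's: structural induction on $\term$, using \Cref{proposition: letter} for the inductive cases and invoking \Cref{lemma: identity val} to handle the $\eps$-membership side of the composition case. One small slip: in the $x^{\compl}$ case you invoke ``the contrapositive of the variable case'' to pass from $\ell \notin \val^{\word}(x)$ to $\word \notin \val(x)$, but the contrapositive of $\ell \in \val^{\word}(x) \Longrightarrow \word \in \val(x)$ runs the other way; what you actually need (and what the paper states) is the full biconditional $\ell \in \val^{\word}(x) \iff \word \in \val(x)$, which is immediate from the definition of $\val^{\word}$ since $\ell \neq \eps$.
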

\begin{proof}
    By induction on $\term$.

    Case $\term = x, x^{-}$:
    By the construction of $\val^{\word}$,
    $\ell \in \hat{\val}^{\word}(x)$ iff $\word \in \hat{\val}(x)$.
    (Hence, we also have $\ell \in \hat{\val}^{\word}(x^{-})$ iff $\word \in \hat{\val}(x^{-})$.)

    Case $\term = \bot, \eps$:
    By $\ell \not\in \hat{\val}^{\word}(\term)$.

    Case $\term = \term[2] \union \term[3]$:
    We have:
    \begin{align*}
        \ell \in \hat{\val}^{\word}(\term[2]) \union \hat{\val}^{\word}(\term[3]) & \Longleftrightarrow \ell \in \hat{\val}^{\word}(\term[2]) \lor \ell \in \hat{\val}^{\word}(\term[3]) \tag{\Cref{proposition: letter}} \\
                                                                                  & \Longrightarrow \word \in \hat{\val}(\term[2]) \lor \word \in \hat{\val}(\term[3]) \tag{IH}                                           \\
                                                                                  & \Longrightarrow \word \in \hat{\val}(\term[2]) \union \hat{\val}(\term[3]).
    \end{align*}

    Case $\term = \term[2] \compo \term[3]$:
    We have:
    \begin{align*}
        \ell \in \hat{\val}^{\word}(\term[2]) \compo \hat{\val}^{\word}(\term[3]) & \Longleftrightarrow (\ell \in \hat{\val}^{\word}(\term[2]) \land \eps \in \hat{\val}^{\word}(\term[3])) \lor (\eps \in \hat{\val}^{\word}(\term[2]) \land \ell \in \hat{\val}^{\word}(\term[3]))          \tag{\Cref{proposition: letter}} \\
                                                                                  & \Longrightarrow (\word \in \hat{\val}(\term[2]) \land \eps \in \hat{\val}(\term[3])) \lor (\eps \in \hat{\val}(\term[2]) \land \word \in \hat{\val}(\term[3])) \tag{IH with \Cref{lemma: identity val}}                                    \\
                                                                                  & \Longrightarrow \word \in \hat{\val}(\term[2]) \compo \hat{\val}(\term[3])
    \end{align*}

    Case $\term = \term[2]^{\kstar}$:
    We have:
    \begin{align*}
        \ell \in \hat{\val}^{\word}(\term[2])^{\kstar} & \Longleftrightarrow \ell \in \hat{\val}^{\word}(\term[2]) \tag{\Cref{proposition: letter}} \\
                                                       & \Longrightarrow \word \in \hat{\val}(\term[2]) \tag{IH}                                    \\
                                                       & \Longrightarrow \word \in \hat{\val}(\term[2])^{\kstar}
    \end{align*}
\end{proof}
Thus we have obtained the expected condition for $\val^{\word}$ as follows:
\begin{corollary}\label{corollary: ell abstraction variable}
    Let $\val$ be a \kl{language valuation} and $\word$ be a \kl{word}.
    For all \kl{variables} $x$ and terms $\term[1]$,
    \[\word \in \hat{\val}(x) \setminus \hat{\val}(\term[1]) \quad \Longrightarrow \quad  \ell \in \hat{\val}^{\word}(x) \setminus \hat{\val}^{\word}(\term[1]).\]
\end{corollary}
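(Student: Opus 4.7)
The plan is to establish the two directions of $\ell \in \hat{\val}^{\word}(x) \setminus \hat{\val}^{\word}(\term[1])$ separately, each from one half of the hypothesis, by directly invoking \Cref{definition: val for word to letter} and \Cref{lemma: ell abstraction}.

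First, I would handle membership in $x$. Assuming $\word \in \hat{\val}(x) = \val(x)$, the very definition of $\val^{\word}$ includes $\ell$ in the set $\set{\ell \mid \word \in \val(x)}$, so $\ell \in \val^{\word}(x) = \hat{\val}^{\word}(x)$. No induction is needed here; this is essentially how $\val^{\word}$ was designed.

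Second, I would handle non-membership in $\term[1]$ by contraposition of \Cref{lemma: ell abstraction}. That lemma gives us $\ell \in \hat{\val}^{\word}(\term[1]) \Longrightarrow \word \in \hat{\val}(\term[1])$, whose contrapositive is exactly $\word \notin \hat{\val}(\term[1]) \Longrightarrow \ell \notin \hat{\val}^{\word}(\term[1])$. Combining with the first part yields $\ell \in \hat{\val}^{\word}(x) \setminus \hat{\val}^{\word}(\term[1])$, which is the desired conclusion.

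There is really no obstacle here: all the conceptual work has already been done in constructing $\val^{\word}$ and proving \Cref{lemma: ell abstraction}. The corollary is just the packaging of these two facts into the form needed for the witness-reduction argument sketched in the paragraph before \Cref{definition: val for word to letter}, reducing the variable inclusion problem $\LANG \models x \le \term$ to the question of whether $\ell \in \hat{\val}'(x) \setminus \hat{\val}'(\term)$ for some language valuation $\val'$ over a one-letter set.
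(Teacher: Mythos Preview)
Your proposal is correct and follows essentially the same two-step approach as the paper: establish $\ell \in \hat{\val}^{\word}(x)$ directly from the definition of $\val^{\word}$, and establish $\ell \notin \hat{\val}^{\word}(\term[1])$ from (the contrapositive of) \Cref{lemma: ell abstraction}. The paper's proof is simply a more terse version of what you wrote.
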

\begin{proof}
    For $\ell \in \hat{\val}^{\word}(x)$:
    By the construction of $\val^{\word}$,
    $\ell \in \hat{\val}^{\word}(x)$ iff $\word \in \hat{\val}(x)$.
    For $\ell \not\in \hat{\val}^{\word}(\term[1])$:
    By \Cref{lemma: ell abstraction}.
\end{proof}
\begin{theorem}\label{theorem: char and univ variable}
    For all \kl{variables} $x$ and terms $\term[1]$, the following are equivalent:
    \begin{enumerate}
        \item \label{theorem: char and univ variable 1} $\LANG \models x \le \term[1]$;
        \item \label{theorem: char and univ variable 2} $\hat{\val}(x) \subseteq \hat{\val}(\term[1])$ for all \kl{language valuations} $\val$ over the set $\set{\ell}$ s.t.\ $\val(y) \subseteq \set{\const{I}, \ell}$ for all $y$;
        \item \label{theorem: char and univ variable 3} $\hat{\val}^{\word}(x) \subseteq \hat{\val}^{\word}(\term[1])$ for all \kl{language valuations} $\val$ and \kl{words} $\word$.
    \end{enumerate}
\end{theorem}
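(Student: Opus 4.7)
The plan is to prove the cycle $(\ref{theorem: char and univ variable 1}) \Rightarrow (\ref{theorem: char and univ variable 2}) \Rightarrow (\ref{theorem: char and univ variable 3}) \Rightarrow (\ref{theorem: char and univ variable 1})$, essentially packaging \Cref{corollary: ell abstraction variable} into a statement about sufficient valuations.

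The implication $(\ref{theorem: char and univ variable 1}) \Rightarrow (\ref{theorem: char and univ variable 2})$ is immediate specialization, since the valuations described in $(\ref{theorem: char and univ variable 2})$ form a subclass of all \kl{language valuations}. For $(\ref{theorem: char and univ variable 2}) \Rightarrow (\ref{theorem: char and univ variable 3})$, I would read off from \Cref{definition: val for word to letter} that for any \kl{language valuation} $\val$ and any \kl{word} $\word$, the valuation $\val^{\word}$ is a \kl{language valuation} on the singleton alphabet $\set{\ell}$ satisfying $\val^{\word}(y) \subseteq \set{\eps, \ell}$ for every variable $y$; hence $(\ref{theorem: char and univ variable 2})$ applies to $\val^{\word}$ and directly yields $\hat{\val}^{\word}(x) \subseteq \hat{\val}^{\word}(\term[1])$.

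The only direction carrying nontrivial content is $(\ref{theorem: char and univ variable 3}) \Rightarrow (\ref{theorem: char and univ variable 1})$, which I would prove by contraposition. Assume $\LANG \not\models x \le \term[1]$; then some \kl{language valuation} $\val$ and some \kl{word} $\word$ satisfy $\word \in \hat{\val}(x) \setminus \hat{\val}(\term[1])$. Applying \Cref{corollary: ell abstraction variable} to this $\val$ and $\word$ produces $\ell \in \hat{\val}^{\word}(x) \setminus \hat{\val}^{\word}(\term[1])$, contradicting $(\ref{theorem: char and univ variable 3})$ for the very same $\val$ and $\word$.

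I do not expect any genuine obstacle: the structural induction has already been carried out in \Cref{lemma: ell abstraction} and packaged into \Cref{corollary: ell abstraction variable}, so the theorem is essentially bookkeeping. The only two points worth verifying carefully are that $\val^{\word}$ really lies in the restricted class of valuations described in $(\ref{theorem: char and univ variable 2})$ (a straightforward reading of \Cref{definition: val for word to letter}) and that the direction of the implication supplied by \Cref{corollary: ell abstraction variable} is what the contrapositive needs, which it is.
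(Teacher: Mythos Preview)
Your proposal is correct and follows exactly the paper's own proof: the paper disposes of $(\ref{theorem: char and univ variable 1})\Rightarrow(\ref{theorem: char and univ variable 2})\Rightarrow(\ref{theorem: char and univ variable 3})$ as trivial (noting, as you do, that $\val^{\word}(y)\subseteq\set{\eps,\ell}$ for all $y$), and proves $(\ref{theorem: char and univ variable 3})\Rightarrow(\ref{theorem: char and univ variable 1})$ by contraposition via \Cref{corollary: ell abstraction variable}.
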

\begin{proof}
    \ref{theorem: char and univ variable 1}$\Rightarrow$\ref{theorem: char and univ variable 2},
    \ref{theorem: char and univ variable 2}$\Rightarrow$\ref{theorem: char and univ variable 3}:
    Trivial, as $\hat{\val}^{\word}(y) \subseteq \set{\eps, \ell}$ for all $y$.
    \ref{theorem: char and univ variable 3}$\Rightarrow$\ref{theorem: char and univ variable 1}:
    The contraposition is shown by \Cref{corollary: ell abstraction variable}.
\end{proof}
\begin{corollary}\label{corollary: variable decidable}
    The variable inclusion problem (given a variable $x$ and a term $\term$, does $\LANG \models x \le \term$?) is decidable and coNP-complete for KA terms with variable complements.
\end{corollary}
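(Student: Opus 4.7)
The plan is to derive both the upper bound and the lower bound from work already done in the paper, so the corollary is essentially a packaging result.

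For membership in coNP, I would appeal directly to the characterization in \Cref{theorem: char and univ variable}: a counterexample to $\LANG \models x \le \term$ can always be found among \kl{language valuations} $\val$ over the single-letter set $\set{\ell}$ with $\val(y) \subseteq \set{\eps, \ell}$ for every \kl{variable} $y$. Such a $\val$ is determined by two bits per variable occurring in $x \le \term$, so it can be guessed in polynomial size. To verify that the guess refutes the inequation, one needs to check whether there exists $\word \in \hat{\val}(x) \setminus \hat{\val}(\term)$; since $\hat{\val}(x) = \val(x) \subseteq \set{\eps, \ell}$, it suffices to decide the two membership questions ``$\eps \in \hat{\val}(\term)$?'' and ``$\ell \in \hat{\val}(\term)$?''.

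Both membership questions can be answered in polynomial time by a single bottom-up pass over the syntax tree of $\term$, computing simultaneously a pair of booleans $(e(\term'), l(\term'))$ for each subterm $\term'$. The base cases $\term' = x, x^{\compl}, \eps, \emp$ are read off from the guessed $\val$ (noting $\eps \in \hat{\val}(y^{\compl}) \iff \eps \not\in \val(y)$ and likewise for $\ell$, since the complement is taken in $\set{\ell}^{\kstar}$). The inductive cases for $\union$, $\compo$, $\kstar$ are handled by \Cref{proposition: identity} for the $\eps$-component and \Cref{proposition: letter} for the $\ell$-component, each giving a constant-time update. This yields an NP procedure for the complement problem, hence coNP membership.

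For coNP-hardness, I would reduce from the identity inclusion problem, which is coNP-hard by \Cref{corollary: identity decidable}. Given $\term$, pick a fresh \kl{variable} $x$ not occurring in $\term$ and output the instance $x \le x \compo \term$. The equivalence
\[\LANG \models \eps \le \term \quad \iff \quad \LANG \models x \le x \compo \term\]
holds: for the forward direction, if $\eps \in \hat{\val}(\term)$ for every $\val$, then $\val(x) = \val(x) \compo \set{\eps} \subseteq \val(x) \compo \hat{\val}(\term) = \hat{\val}(x \compo \term)$; for the converse, given $\val$, extend it by $\val(x) \defeq \set{\eps}$, and the hypothesis gives $\set{\eps} \subseteq \set{\eps} \compo \hat{\val}(\term) = \hat{\val}(\term)$. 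The reduction is clearly polynomial.

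The only delicate point I foresee is keeping the polynomial-time membership check honest in the presence of variable complements, but this is already resolved by the fact that complement applies only to variables and so is a base case rather than an inductive case. Everything else is bookkeeping on top of \Cref{theorem: char and univ variable} and \Cref{corollary: identity decidable}.
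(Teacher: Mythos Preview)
Your proposal is correct and follows essentially the same approach as the paper: coNP membership via condition~\ref{theorem: char and univ variable 2} of \Cref{theorem: char and univ variable} (the paper simply says ``as with \Cref{corollary: identity decidable}'' where you spell out the bottom-up check), and coNP-hardness via the equivalence $\LANG \models \eps \le \term \iff \LANG \models x \le x \compo \term$ for fresh $x$, which the paper also uses (phrasing the two directions as the congruence and substitution laws). Your semantic argument for the converse---extending a valuation by $\val(x) \defeq \set{\eps}$---is exactly what the substitution law amounts to here.
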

\begin{proof}
    (in coNP):
    By the condition \ref{theorem: char and univ variable 2} of \Cref{theorem: char and univ variable},
    we can give an algorithm as with \Cref{corollary: identity decidable}.
    (coNP-hard):
    We give a reduction from the validity problem of propositional formulas in disjunctive normal form, as with \Cref{corollary: identity decidable}.
    Given a propositional formula $\varphi$ in disjunctive normal form,
    let $\term$ be the term obtained by the translation in \Cref{corollary: identity decidable}; so, we have that $\varphi$ is valid iff $\LANG \models \eps \le \term$.
    Then we also have that $\LANG \models \eps \le \term$ iff $\LANG \models z \le z \compo \term$ (where $z$ is a fresh \kl{variable});
    the direction $\Longrightarrow$ is shown by the congruence law, and the direction $\Longleftarrow$ is shown by the substitution law.
    Therefore, we have that $\varphi$ is valid iff $\LANG \models z \le z \compo \term$; thus, the variable inclusion problem is coNP-hard.
\end{proof}

\subsection{Generalization from variables to composition-free terms}\label{section: composition-free}
The proof above applies to not only \kl{variables} but also terms $\term$ having the following property:
For all \kl{language valuations} $\val$,
\begin{flalign*}
    \label{cond: length 1} \mbox{for all non-empty \kl{words} $\word$, \qquad} \word \in \hat{\val}(\term) \Longrightarrow \ell \in \hat{\val}^{\word}(\term). \tag*{$(\LENGTH{1})$}
\end{flalign*}
(This condition is intended for \kl{composition-free} terms (\Cref{lemma: composition-free}). This is generalized to $(\LENGTH{n})$ in \Cref{section: star-free}.)
If $\term[1]$ satisfies the condition \ref{cond: length 1}, then combining with \Cref{lemma: ell abstraction}
(and with \Cref{lemma: identity val} for the empty word $\eps$) yields that
for all \kl{language valuations} $\val$ and \kl{words} $\word$,
\begin{align*}
    \word \in \hat{\val}(\term[1]) \setminus \hat{\val}(\term[2]) & \quad \Longrightarrow \quad \begin{cases}
                                                                                                    \ell \in \hat{\val}^{\word}(\term[1]) \setminus \hat{\val}^{\word}(\term[2]) & (\mbox{if $\word \neq \eps$}) \\
                                                                                                    \eps \in \hat{\val}^{\word}(\term[1]) \setminus \hat{\val}^{\word}(\term[2]) & (\mbox{if $\word = \eps$})
                                                                                                \end{cases}.
\end{align*}
Hence, we have the following:
\begin{theorem}[cf.\ \Cref{theorem: char and univ variable}]\label{theorem: char and univ}
    For all terms $\term[1], \term[2]$, if $\term[1]$ satisfies \ref{cond: length 1}, then the following are equivalent:
    \begin{enumerate}
        \item \label{theorem: char and univ 1} $\LANG \models \term[1] \le \term[2]$;
        \item \label{theorem: char and univ 2} $\hat{\val}(\term[1]) \subseteq \hat{\val}(\term[2])$ for all \kl{language valuations} $\val$ over the set $\set{\ell}$ s.t.\ $\val(x) \subseteq \set{\const{I}, \ell}$ for all $x$;
        \item \label{theorem: char and univ 3} $\hat{\val}^{\word}(\term[1]) \subseteq \hat{\val}^{\word}(\term[2])$ for all \kl{language valuations} $\val$ and \kl{words} $\word$.
    \end{enumerate}
\end{theorem}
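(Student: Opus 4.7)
The plan is to follow the template of \Cref{theorem: char and univ variable}: establish \ref{theorem: char and univ 1}$\Rightarrow$\ref{theorem: char and univ 2}$\Rightarrow$\ref{theorem: char and univ 3} as essentially immediate consequences of the definitions, and then derive \ref{theorem: char and univ 3}$\Rightarrow$\ref{theorem: char and univ 1} by contraposition, combining the hypothesis \ref{cond: length 1} with the two ``abstraction'' lemmas already established (\Cref{lemma: identity val} for the empty word and \Cref{lemma: ell abstraction} for non-empty words).

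The implication \ref{theorem: char and univ 1}$\Rightarrow$\ref{theorem: char and univ 2} is trivial because \ref{theorem: char and univ 2} only restricts the class of valuations being quantified over. For \ref{theorem: char and univ 2}$\Rightarrow$\ref{theorem: char and univ 3}, I would observe that for every language valuation $\val$ and every word $\word$, the valuation $\val^{\word}$ of \Cref{definition: val for word to letter} is itself a language valuation over the singleton set $\set{\ell}$ satisfying $\val^{\word}(x) \subseteq \set{\eps, \ell}$ for every variable $x$, so any inclusion provided by \ref{theorem: char and univ 2} applies verbatim to $\val^{\word}$.

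The main step is \ref{theorem: char and univ 3}$\Rightarrow$\ref{theorem: char and univ 1}, which I would prove by contraposition. Suppose $\LANG \not\models \term[1] \le \term[2]$; then there exist a language valuation $\val$ and a word $\word$ with $\word \in \hat{\val}(\term[1]) \setminus \hat{\val}(\term[2])$. Split on whether $\word$ is empty. If $\word \neq \eps$, then \ref{cond: length 1} applied to $\term[1]$ yields $\ell \in \hat{\val}^{\word}(\term[1])$, while the contrapositive of \Cref{lemma: ell abstraction} applied to $\term[2]$ yields $\ell \notin \hat{\val}^{\word}(\term[2])$. If $\word = \eps$, then by \Cref{definition: val for word to letter} we have $\eps \in \val^{\eps}(x) \iff \eps \in \val(x)$ for every variable $x$, so \Cref{lemma: identity val} applied separately to $\term[1]$ and $\term[2]$ transfers $\eps \in \hat{\val}(\term[1]) \setminus \hat{\val}(\term[2])$ into $\eps \in \hat{\val}^{\eps}(\term[1]) \setminus \hat{\val}^{\eps}(\term[2])$. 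Either way condition \ref{theorem: char and univ 3} is refuted.

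I do not expect any real obstacle: the paragraph preceding the statement already records the crucial combined implication, so the proof essentially amounts to organising the case split on $\word = \eps$ versus $\word \neq \eps$ and invoking the cited lemmas. The only delicate point worth flagging is that \ref{cond: length 1} is deliberately formulated only for non-empty $\word$, which is precisely why the empty-word case has to be routed through \Cref{lemma: identity val} instead; this asymmetry mirrors the fact that the one-letter alphabet $\set{\ell}$ supports exactly the two ``lengths'' $\eps$ and $\ell$, each handled by its own abstraction lemma.
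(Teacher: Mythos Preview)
Your proposal is correct and matches the paper's approach exactly: the paper simply writes ``As with \Cref{theorem: char and univ variable} (use the above, instead of \Cref{corollary: ell abstraction variable}),'' where ``the above'' is precisely the case split on $\word = \eps$ versus $\word \neq \eps$ combining \ref{cond: length 1} with \Cref{lemma: ell abstraction} and \Cref{lemma: identity val} that you spell out. Your version is just a more explicit rendering of the same argument.
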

\begin{proof}
    As with \Cref{theorem: char and univ variable} (use the above, instead of \Cref{corollary: ell abstraction variable}).
\end{proof}
\Cref{theorem: char and univ} can apply to \kl{composition-free} terms.
We say that a term $\term$ is \intro*\kl{composition-free} if composition ($\compo$) nor Kleene-star ($\bl^{\kstar}$) does not occur in $\term$.
\begin{lemma}\label{lemma: composition-free}
    Every \kl{composition-free} terms $\term$ satisfies the condition \ref{cond: length 1}.
\end{lemma}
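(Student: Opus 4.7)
The plan is a straightforward structural induction on the composition-free term $\term$, using the definition of $\val^{\word}$ directly on the atomic cases and Proposition~\ref{proposition: letter} together with the induction hypothesis on the union case. Fix a language valuation $\val$ and a non-empty word $\word$; we show $\word \in \hat{\val}(\term) \Longrightarrow \ell \in \hat{\val}^{\word}(\term)$.

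For the atomic cases: when $\term = x$, the implication is immediate from the defining clause $\val^{\word}(x) \supseteq \set{\ell \mid \word \in \val(x)}$. When $\term = x^{\compl}$, note that $\ell \in \hat{\val}^{\word}(x^{\compl})$ iff $\ell \notin \val^{\word}(x)$, and by the definition of $\val^{\word}$, $\ell \in \val^{\word}(x)$ iff $\word \in \val(x)$; hence the implication $\word \notin \val(x) \Longrightarrow \ell \notin \val^{\word}(x)$ is just the contrapositive, giving the required conclusion. For $\term = \eps$ and $\term = \emp$ the implication holds vacuously: the premise $\word \in \hat{\val}(\term)$ forces $\word = \eps$ (respectively is impossible), contradicting $\word \neq \eps$.

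The only inductive case in a composition-free term is $\term = \term[2] \union \term[3]$. Here $\word \in \hat{\val}(\term)$ gives $\word \in \hat{\val}(\term[2])$ or $\word \in \hat{\val}(\term[3])$; by the induction hypothesis applied to the relevant subterm we obtain $\ell \in \hat{\val}^{\word}(\term[2])$ or $\ell \in \hat{\val}^{\word}(\term[3])$, and Proposition~\ref{proposition: letter} (the union clause) yields $\ell \in \hat{\val}^{\word}(\term[2] \union \term[3])$.

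I do not expect any serious obstacle: by the definition of \kl{composition-free}, neither $\compo$ nor $\bl^{\kstar}$ occurs, so these are all the cases. The only mildly subtle step is the $x^{\compl}$ case, where one must be careful that the underlying alphabet of $\val^{\word}$ is $\set{\ell}$, so that $\hat{\val}^{\word}(x^{\compl}) = \set{\ell}^{\kstar} \setminus \val^{\word}(x)$, and membership of the single letter $\ell$ in this complement reduces directly to $\word \notin \val(x)$ by the defining equation for $\val^{\word}$.
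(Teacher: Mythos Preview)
Your proof is correct and follows essentially the same approach as the paper's own proof: a straightforward structural induction on the composition-free term, handling the atomic cases $x, x^{\compl}, \eps, \emp$ directly from the definition of $\val^{\word}$ (with $\eps$ and $\emp$ vacuous since $\word$ is non-empty), and the union case by the induction hypothesis. Your treatment of the $x^{\compl}$ case is slightly more explicit than the paper's, but the argument is the same.
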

\begin{proof}
    By easy induction on $\term$.
    Case $\term = x, x^{\compl}$:
    By the definition of $\val^{\word}$.
    Case $\term = \eps$:
    By that $\word \not\in \hat{\val}(\eps)$ holds for all non-empty \kl{words} $\word$.
    Case $\term = \bot$:
    By $\word \not\in \hat{\val}(\bot)$ always.
    Case $\term = \term[2] \union \term[3]$:
    By IH, we have that $\word \in \hat{\val}(\term[2]) \union \hat{\val}(\term[3]) \iff \word \in \hat{\val}(\term[2]) \lor \word \in \hat{\val}(\term[3]) \Longrightarrow \ell \in \hat{\val}^{\word}(\term[2]) \lor \ell \in \hat{\val}^{\word}(\term[3]) \iff \ell \in \hat{\val}^{\word}(\term[2]) \union \hat{\val}^{\word}(\term[3])$.
\end{proof}
\begin{corollary}\label{corollary: composition-free decidable}
    The following problem is coNP-complete for KA terms with variable complements:
    \begin{center}
        Given a \kl{composition-free} term $\term[1]$ and a term $\term[2]$, does $\LANG \models \term[1] \le \term[2]$ hold?
    \end{center}
\end{corollary}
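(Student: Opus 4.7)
The plan is to reduce the corollary to results already established: the characterization \Cref{theorem: char and univ} together with \Cref{lemma: composition-free}. Since $\term[1]$ is \kl{composition-free}, \Cref{lemma: composition-free} gives that $\term[1]$ satisfies the condition \ref{cond: length 1}, so \Cref{theorem: char and univ} applies. This is the heart of both bounds.

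For membership in coNP, I would invoke the equivalence \ref{theorem: char and univ 1}$\Leftrightarrow$\ref{theorem: char and univ 2} of \Cref{theorem: char and univ}: $\LANG \models \term[1] \le \term[2]$ iff $\hat\val(\term[1]) \subseteq \hat\val(\term[2])$ for every \kl{language valuation} $\val$ over $\set{\ell}$ with $\val(x) \subseteq \set{\const{I}, \ell}$ for all $x$. Since there are only four such choices per \kl{variable}, a refuting $\val$ on the finitely many \kl{variables} appearing in $\term[1], \term[2]$ can be guessed in polynomial space; having fixed $\val$, both $\hat\val(\term[1])$ and $\hat\val(\term[2])$ are subsets of $\set{\eps, \ell}$ and can be evaluated in polynomial time by structural induction (using \Cref{proposition: identity,proposition: letter} to propagate membership of $\eps$ and of $\ell$ through each operator). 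So the non-deterministic algorithm: guess such a $\val$, compute $\hat\val(\term[1]), \hat\val(\term[2])$, and accept iff $\hat\val(\term[1]) \not\subseteq \hat\val(\term[2])$. This puts the complement in NP, hence the problem in coNP.

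For coNP-hardness, I would observe that $\eps$ is itself a \kl{composition-free} term, so the identity inclusion problem (instance: does $\LANG \models \eps \le \term$?) is a special case of the problem in the statement. Since the identity inclusion problem is coNP-hard by \Cref{corollary: identity decidable} (reduction from validity of disjunctive normal form), hardness follows immediately.

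The main obstacle, if any, lies in the polynomial-time evaluation step inside the non-deterministic algorithm, to make sure that we really compute $\hat\val(\term[i])$ as an element of $\wp(\set{\eps, \ell})$ without blow-up; but this is routine given that every subterm has value in the four-element lattice $\wp(\set{\eps, \ell})$ and that \Cref{proposition: identity,proposition: letter} give the composition, union, and star clauses directly, with $\val(x^{\compl}) = \set{\eps, \ell} \setminus \val(x)$. Everything else is bookkeeping over the reductions already in the paper.
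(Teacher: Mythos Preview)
Your proposal is essentially the paper's own proof: coNP-membership via condition~\ref{theorem: char and univ 2} of \Cref{theorem: char and univ} together with \Cref{lemma: composition-free}, and coNP-hardness by observing that $\eps$ is \kl{composition-free} so \Cref{corollary: identity decidable} applies.

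One small inaccuracy worth fixing: it is not true that $\hat\val(\term[1])$ and $\hat\val(\term[2])$ are subsets of $\set{\eps,\ell}$---for instance $\hat\val(x^{\compl})=\set{\ell}^{\kstar}\setminus\val(x)$ is infinite whenever $\val(x)$ is finite, and $\term[2]$ may contain compositions and stars producing longer words. What is true (and what you actually use) is that it suffices to look for a witness in $\set{\eps,\ell}$: this is exactly the displayed implication just before \Cref{theorem: char and univ}, which guarantees that if any $\word$ separates $\term[1]$ from $\term[2]$ under some $\val$, then either $\ell$ or $\eps$ separates them under $\val^{\word}$. So your bit-propagation for $\eps$- and $\ell$-membership is the right computation, but its correctness relies on the proof of \Cref{theorem: char and univ}, not merely on the equivalence \ref{theorem: char and univ 1}$\Leftrightarrow$\ref{theorem: char and univ 2} as stated.
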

\begin{proof}
    (coNP-hard):
    By \Cref{corollary: identity decidable}, as $\term[1]$ is possibly $\eps$.
    (in coNP):
    By \Cref{theorem: char and univ} with \Cref{lemma: composition-free}, we can give an algorithm (from the condition \ref{theorem: char and univ 2} of \Cref{theorem: char and univ}) as with \Cref{corollary: variable decidable}.
\end{proof}

\begin{remark}\label{remark: composition-free}
    \Cref{lemma: composition-free} fails for non-\kl{composition-free} terms.
    For example, when $\val(x) = \set{a}$, we have
    \begin{align*}
        a a \in \hat{\val}(x x) &  & \ell \not\in \hat{\val}^{a a}(x x).
    \end{align*}
    (Note that $\hat{\val}^{a a}(x x) = \emptyset$, as $\hat{\val}^{a a}(x) = \emptyset$ by $\val(x) = \set{a}$.)
\end{remark}

\subsection{The universality problem}
The \emph{universality problem} is the following problem:
\begin{center}
    Given a term $\term$, does $\LANG \models \top \le \term$?
\end{center}
As a consequence of \Cref{corollary: composition-free decidable}, the universality problem is also decidable and coNP-complete.
\begin{corollary}\label{corollary: universality decidable}
    The universality problem is decidable and coNP-complete for KA terms with variable complements.
\end{corollary}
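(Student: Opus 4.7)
The plan is to deduce both complexity bounds from results already established in this section, with essentially no new work.

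For the coNP upper bound, observe that $\top$ is shorthand for $x \union x^{\compl}$, which is a \kl{composition-free} term (neither $\compo$ nor $\bl^{\kstar}$ occurs). Setting $\term[1] \defeq \top$ and $\term[2] \defeq \term$, the universality problem is literally a special case of the inclusion problem of \Cref{corollary: composition-free decidable}, so the coNP algorithm established there applies directly.

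For coNP-hardness, I would reduce from the variable inclusion problem, which is coNP-hard by \Cref{corollary: variable decidable}. Given an instance $(x, \term)$ of that problem, map it to the universality instance $\term \union x^{\compl}$. The required equivalence
\[\LANG \models x \le \term \quad \iff \quad \LANG \models \top \le \term \union x^{\compl}\]
is a one-line check: for any \kl{language valuation} $\val$ over a set $X$, we have $\hat{\val}(\top) = X^{\kstar}$ and $\hat{\val}(x^{\compl}) = X^{\kstar} \setminus \val(x)$, so the inclusion $X^{\kstar} \subseteq \hat{\val}(\term) \union (X^{\kstar} \setminus \val(x))$ holds iff, by taking complements in $X^{\kstar}$, the inclusion $\val(x) \subseteq \hat{\val}(\term)$ holds. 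Since the translation is polynomial, coNP-hardness of variable inclusion transfers to universality.

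There is essentially no obstacle; the statement is a direct corollary of \Cref{corollary: composition-free decidable,corollary: variable decidable}. The only observations needed are that $\top$ is composition-free (yielding membership) and that a variable inclusion can be turned into a universality question via the short identity displayed above (yielding hardness).
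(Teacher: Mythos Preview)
Your proof is correct. The coNP upper bound is established exactly as in the paper: $\top = x \union x^{\compl}$ is \kl{composition-free}, so \Cref{corollary: composition-free decidable} applies.

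For coNP-hardness, your reduction differs from the paper's. The paper reduces directly from DNF validity (via the identity inclusion problem): given $\varphi$, translate it to a term $\term$ as in \Cref{corollary: identity decidable} and show $\LANG \models \eps \le \term$ iff $\LANG \models \top \le \top \compo \term$, using the congruence law for $\Longrightarrow$ and the observation that $\eps \not\in \hat{\val}(\term)$ implies $\eps \not\in \hat{\val}(\top \compo \term)$ for $\Longleftarrow$. You instead reduce from the variable inclusion problem (already coNP-hard by \Cref{corollary: variable decidable}) via the equivalence $\LANG \models x \le \term \iff \LANG \models \top \le \term \union x^{\compl}$, which is indeed immediate from set complementation. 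Both reductions are short and correct; yours has the mild advantage of reusing a result just proved rather than reworking the DNF translation, while the paper's keeps the chain of reductions uniform (each hardness proof goes back to DNF validity through the same template).
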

\begin{proof}
    (in coNP):
    We can apply \Cref{corollary: composition-free decidable} because the term $x \union x^{\compl}$ is \kl{composition-free} and $\LANG \models \top = x \union x^{\compl}$ holds.
    (coNP-hard):
    Similar to \Cref{corollary: variable decidable}.
    Given a propositional formula $\varphi$ in disjunctive normal form,
    let $\term$ be the term obtained by the translation in \Cref{corollary: identity decidable}; so, we have that $\varphi$ is valid iff $\LANG \models \eps \le \term$.
    Then we also have that $\LANG \models \eps \le \term$ iff $\LANG \models \top \le \top \compo \term$, which is proved as follows.
    $\Longrightarrow$:
    By the congruence law.
    $\Longleftarrow$:
    We prove the contraposition.
    Assume $\LANG \not\models \eps \le \term$; then $\eps \not\in \hat{\val}(\term)$ for some \kl{language valuation} $\val$.
    Then $\eps \not\in \hat{\val}(\top \compo \term)$ holds; thus, $\LANG \not\models \top \le \top \compo \term$.
    Hence, the universality problem is coNP-complete.
\end{proof}

\begin{remark}
    In the standard language equivalence, because $\lang{\vsig^{\kstar}} = \lang{\top}$ (and the constant $\top$ is usually not a primitive symbol of regular expressions),
    the universality problem is always of the form: $\lang{\vsig^{\kstar}} = \lang{\term}$.
    However, $\LANG \models \vsig^{\kstar} \le \term$ is different from $\LANG \models \top \le \term$, as $\LANG \not \models \vsig^{\kstar} = \top$.
\end{remark}

\begin{remark}\label{remark: universality}
    Under the standard language equivalence, the universality problem---given a term $\term$, does $\lang{\top} \subseteq \lang{\term}$? (i.e., does $\lang{\term} = \vsig^*$?)---is PSPACE-hard \cite{meyerEquivalenceProblemRegular1972, Meyer1973, Hunt1976}.
    Hence, for KA terms with variable complements, the universality problem w.r.t.\ languages is strictly easier (cf.\ \Cref{remark: identity inclusion}) than that under the standard language equivalence unless NP = PSPACE.
\end{remark}

\section{The word inclusion problem}\label{section: word}
Let $\vsig' = \set{x, x^{\compl} \mid x \in \vsig}$.
The \emph{word inclusion problem} is the following problem:
\begin{center}
    Given a \kl{word} $\word$ over $\vsig'$ and a term $\term$, does $\LANG \models \word \le \term$?
\end{center}
As \Cref{remark: composition-free} shows, we cannot apply the method given in \Cref{section: univ} straightforwardly.
Nevertheless, we can solve this problem by generalizing the \kl{language valuation} of \Cref{definition: val for word to letter}, as follows.
The valuations in \Cref{definition: val for word to letter,definition: val for words to letters} are given by the first author. %
\begin{definition}[\intro*\kl{words-to-letters valuations}]\label{definition: val for words to letters}
    For a \kl{language valuation} $\val$ over a set $X$ and \kl{words} $\word_0, \dots, \word_{n-1}$ over $X$ (where $n \ge 0$),
    the \kl{language valuation} $\val^{\tuple{\word_0, \dots, \word_{n-1}}}$ over the set $\set{\ell_0, \dots, \ell_{n-1}}$ (where $\ell_{0}, \dots, \ell_{n-1}$ are pairwise distinct \kl{letters}) is defined as follows:
    \[\val^{\tuple{\word_0, \dots, \word_{n-1}}}(x) \quad\defeq\quad \set{\ell_i \dots \ell_{j-1} \mid 0 \le i \le j \le n \  \land \  \word_i \dots \word_{j-1} \in \val(x)}.\]
\end{definition}
\noindent (Note that the \kl{language valuation} $\val^{\word}$ (\Cref{definition: val for word to letter}) is the case $n = 1$ of \Cref{definition: val for words to letters} and the \kl{language valuation} $\val^{\tuple{}}$ in the proof of \Cref{theorem: identity val} is the case $n = 0$ of \Cref{definition: val for words to letters}.)

By using \kl{words-to-letters valuations}, we can naturally strengthen the results in \Cref{section: univ} from \kl{variables} to \kl{words}.
We prepare the following fact:
\begin{proposition}[cf.\ \Cref{proposition: letter}]\label{proposition: word}
    For all \kl{languages} $\la[1], \la[2]$ and \kl{words} $\word$,
    \begin{align*}
        \word \in \la[1] \union \la[2] & \quad\iff\quad \word \in \la[1] \lor \word \in \la[2]                                                                                                                            \\
        \word \in \la[1] \compo \la[2] & \quad\iff\quad \exists \word[2], \word[2]' \mbox{ s.t.\ } \word = \word[2] \word[2]',\  \word[2] \in \la[1] \land \word[2]' \in \la[2]                                           \\
        \word \in \la[1]^{\kstar}      & \quad\iff\quad \exists n \in \nat, \exists \word[2]_{0}, \dots, \word[2]_{n-1} \mbox{ s.t.\ } \word = \word[2]_{0} \dots \word[2]_{n-1}, \forall i < n,\  \word[2]_i \in \la[1].
    \end{align*}
\end{proposition}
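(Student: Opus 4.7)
The plan is to prove each of the three equivalences directly from the definitions of $\union$, $\compo$, and $\bl^{\kstar}$ on \kl{languages} given in \Cref{section: preliminaries}. No induction, no construction of auxiliary objects, and no case analysis on $\word$ is needed; each item is an immediate unfolding, entirely in the same style as \Cref{proposition: identity} and \Cref{proposition: letter}, whose proofs were also one-liners citing the definitions.

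For the union case, I would simply observe that $\la[1] \union \la[2]$ is set-theoretic union, so $\word \in \la[1] \union \la[2]$ is by definition equivalent to $\word \in \la[1] \lor \word \in \la[2]$. For the composition case, unfolding $\la[1] \compo \la[2] \defeq \set{\word[2] \word[2]' \mid \word[2] \in \la[1] \land \word[2]' \in \la[2]}$ via set-builder notation immediately yields that $\word \in \la[1] \compo \la[2]$ holds iff there exist $\word[2], \word[2]'$ with $\word = \word[2] \word[2]'$, $\word[2] \in \la[1]$, and $\word[2]' \in \la[2]$. For the \kl{Kleene star} case, unfolding the defining equation $\la[1]^{\kstar} = \set{\word[2]_0 \dots \word[2]_{n-1} \mid \exists n \in \nat, \forall i < n,\ \word[2]_i \in \la[1]}$ from \Cref{section: preliminaries} gives the right-hand side verbatim.

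I expect no real obstacle: the proposition merely records, in a convenient rewrite-form, the standard word-membership characterization of the three language operations. Its role will be (analogously to how \Cref{proposition: letter} is used in the proof of \Cref{lemma: ell abstraction}) to serve as a reusable lemma for the forthcoming inductions on terms that justify the \kl{words-to-letters valuation} construction, keeping those inductive arguments uncluttered. If anything subtle at all arises, it is only the usual care that the decomposition $\word = \word[2]\word[2]'$ in the composition clause need not be unique and that the integer $n$ in the star clause is allowed to be $0$ (yielding $\word = \eps$); both points are already implicit in the definitions and require no further argument.
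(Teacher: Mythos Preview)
Your proposal is correct and matches the paper's own proof, which is the one-liner ``By definition.'' You have simply spelled out what that unfolding looks like for each of the three operations, which is exactly the intended (trivial) argument.
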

\begin{proof}
    By definition.
\end{proof}

\begin{lemma}[cf.\ \Cref{lemma: ell abstraction}]\label{lemma: ell abstraction gen}
    Let $\val$ be a \kl{language valuation} and $\word_0, \dots, \word_{n-1}$ be \kl{words} (where $n \ge 0$).
    For all terms $\term$ and $0 \le i \le j \le n$, we have:
    \[\ell_{i} \dots \ell_{j-1} \in \hat{\val}^{\tuple{\word_0, \dots, \word_{n-1}}}(\term) \quad \Longrightarrow \quad \word_{i} \dots \word_{j-1} \in \hat{\val}(\term).\]
\end{lemma}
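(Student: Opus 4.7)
The plan is to proceed by structural induction on $\term$, mirroring the argument for \Cref{lemma: ell abstraction} (which is exactly the case $n=1$, $(i,j)=(0,1)$). The inductive statement is the displayed implication quantified over \emph{all} interval endpoints $0 \le i \le j \le n$, so that the induction hypothesis is available for every sub-interval of $[0,n]$ that arises when decomposing $\ell_i \dots \ell_{j-1}$.

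For the atomic cases $\term = x$ and $\term = x^{\compl}$, the definition of $\val^{\tuple{\word_0, \dots, \word_{n-1}}}$ delivers the implication as an ``iff'': since the letters $\ell_0, \dots, \ell_{n-1}$ are pairwise distinct, the word $\ell_i \dots \ell_{j-1}$ (including the empty word at $i = j$) admits only the expected encoding, so membership in $\val^{\tuple{\word_0, \dots, \word_{n-1}}}(x)$ is equivalent to $\word_i \dots \word_{j-1} \in \val(x)$; this equivalence passes through the set complement for $x^{\compl}$. Cases $\term = \bot$ and $\term = \eps$ are immediate (for $\eps$, membership forces $i = j$ and hence $\word_i \dots \word_{j-1} = \eps$), and the union case is a one-line consequence of \Cref{proposition: word} with the IH.

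The composition and Kleene-star cases are where the choice of valuation pays off. By \Cref{proposition: word}, a membership $\ell_i \dots \ell_{j-1} \in \hat{\val}^{\tuple{\word_0, \dots, \word_{n-1}}}(\term[2] \compo \term[3])$ yields a factorization $\ell_i \dots \ell_{j-1} = \word[2] \word[2]'$ whose factors lie in the respective $\val^{\tuple{\word_0, \dots, \word_{n-1}}}$-images. Because $\ell_0, \dots, \ell_{n-1}$ are pairwise distinct, the free-monoid factorization is forced to be $\word[2] = \ell_i \dots \ell_{k-1}$ and $\word[2]' = \ell_k \dots \ell_{j-1}$ for a unique $k$ with $i \le k \le j$; applying the IH to $(i,k)$ and $(k,j)$ then gives $\word_i \dots \word_{k-1} \in \hat{\val}(\term[2])$ and $\word_k \dots \word_{j-1} \in \hat{\val}(\term[3])$, hence $\word_i \dots \word_{j-1} \in \hat{\val}(\term[2]\compo\term[3])$. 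The Kleene-star case is identical in form, with an $m$-fold factorization indexed by $i = k_0 \le k_1 \le \dots \le k_m = j$ and the IH applied to each consecutive pair $(k_p, k_{p+1})$.

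The one nontrivial ingredient, and thus the main obstacle, is the pairwise-distinctness observation constraining factorizations of $\ell_i \dots \ell_{j-1}$ to the ``consecutive-interval'' splits $(\ell_i \dots \ell_{k-1})(\ell_k \dots \ell_{j-1})$; this is precisely the reason \Cref{definition: val for words to letters} introduces pairwise distinct letters $\ell_0, \dots, \ell_{n-1}$. Once this observation is recorded, the remainder of the induction is mechanical.
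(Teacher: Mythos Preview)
Your proof is correct and follows the same structural induction as the paper's own argument. One small clarification: pairwise distinctness of the $\ell_i$ is genuinely needed only in the atomic case $\term = x$ (so that $\ell_i\dots\ell_{j-1}$ has a unique representation as an interval in the definition of $\val^{\tuple{\word_0,\dots,\word_{n-1}}}$), whereas in the composition and star cases any factorization of a word in a free monoid is already a prefix--suffix pair $(\ell_i\dots\ell_{k-1})(\ell_k\dots\ell_{j-1})$ regardless of distinctness, so your invocation of it there is harmless but unnecessary.
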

\begin{proof}
    By induction on $\term$.

    Case $\term = x, x^{\compl}$:
    By the construction of $\val^{\tuple{\word_0, \dots, \word_{n-1}}}$,
    $\ell_{i} \dots \ell_{j-1} \in \hat{\val}^{\tuple{\word_0, \dots, \word_{n-1}}}(x)$ iff $\word_i \dots \word_{j-1} \in \hat{\val}(x)$.
    (Hence, we also have $\ell_{i} \dots \ell_{j-1} \in \hat{\val}^{\tuple{\word_0, \dots, \word_{n-1}}}(x^{\compl})$ iff $\word_{i}\dots\word_{j-1} \in \hat{\val}(x^{\compl})$.)

    Case $\term = \bot$, Case $\term = \eps$ where $i < j$:
    By $\ell_{i} \dots \ell_{j-1} \not\in \hat{\val}^{\tuple{\word_0, \dots, \word_{n-1}}}(\term)$.

    Case $\term = \eps$ where $i = j$:
    By $\eps \in \hat{\val}(\eps)$.

    Case $\term = \term[2] \union \term[3]$:
    We have
    \begin{align*}
        \ell_{i} \dots \ell_{j-1} \in \hat{\val}^{\tuple{\word_0, \dots, \word_{n-1}}}(\term[2] \union \term[3]) & \Longleftrightarrow \ell_{i} \dots \ell_{j-1} \in \hat{\val}^{\tuple{\word_0, \dots, \word_{n-1}}}(\term[2]) \lor \ell_{i} \dots \ell_{j-1} \in \hat{\val}^{\tuple{\word_0, \dots, \word_{n-1}}}(\term[3]) \tag{\Cref{proposition: word}} \\
                                                                                                                 & \Longrightarrow \word_{i} \dots \word_{j-1} \in \hat{\val}(\term[2]) \lor \word_{i} \dots \word_{j-1} \in \hat{\val}(\term[3]) \tag{IH}                                                                                                   \\
                                                                                                                 & \Longrightarrow \word_{i} \dots \word_{j-1} \in \hat{\val}(\term[2] \union \term[3]).
    \end{align*}

    Case $\term = \term[2] \compo \term[3]$:
    We have
    \begin{align*}
        \ell_{i} \dots \ell_{j-1} \in \hat{\val}^{\tuple{\word_0, \dots, \word_{n-1}}}(\term[2] \compo \term[3]) & \Longleftrightarrow \bigvee_{i \le k \le j} (\ell_{i} \dots \ell_{k-1} \in \hat{\val}^{\tuple{\word_0, \dots, \word_{n-1}}}(\term[2]) \land \ell_{k} \dots \ell_{j-1} \in \hat{\val}^{\tuple{\word_0, \dots, \word_{n-1}}}(\term[3])) \tag{\Cref{proposition: word}} \\
                                                                                                                 & \Longrightarrow \bigvee_{i \le k \le j} (\word_{i} \dots \word_{k-1} \in \hat{\val}(\term[2]) \land \word_{k} \dots \word_{j-1} \in \hat{\val}(\term[3])) \tag{IH}                                                                                                   \\
                                                                                                                 & \Longrightarrow \word_{i} \dots \word_{j-1} \in \hat{\val}(\term[2] \compo \term[3]).
    \end{align*}

    Case $\term = \term[2]^*$:
    We have
    \begin{align*}
        \ell_{i} \dots \ell_{j-1} \in \hat{\val}^{\tuple{\word_0, \dots, \word_{n-1}}}(\term[2]^*) & \Longleftrightarrow \exists m \in \nat, \bigvee_{i = k_0 \le k_1 \le \dots \le k_m = j} \bigwedge_{l = 1}^{m} (\ell_{k_{l-1}} \dots \ell_{k_{l}-1} \in \hat{\val}^{\tuple{\word_0, \dots, \word_{n-1}}}(\term[2])) \tag{\Cref{proposition: word}} \\
                                                                                                   & \Longrightarrow \exists m \in \nat, \bigvee_{i = k_0 \le k_1 \le \dots \le k_m = j} \bigwedge_{l = 1}^{m} (\word_{k_{l-1}} \dots \word_{k_{l}-1} \in \hat{\val}(\term[2])) \tag{IH}                                                               \\
                                                                                                   & \Longrightarrow \word_{i} \dots \word_{j-1} \in \hat{\val}(\term[2]^*).
    \end{align*}
\end{proof}
\begin{corollary}[cf.\ \Cref{corollary: ell abstraction variable}]\label{corollary: ell abstraction word}
    Let $\val$ be a \kl{language valuation}, $\word$ be a \kl{word},
    $\word_{0}, \dots, \word_{n-1}$ be \kl{words} s.t.\ $\word = \word_{0} \dots \word_{n-1}$.
    For all \kl{words} $\word[2]$ over $\vsig'$ of \kl{length} $n$ and all terms $\term$,
    \[\word \in \hat{\val}(\word[2]) \setminus \hat{\val}(\term) \quad \Longrightarrow \quad  \ell_0 \dots \ell_{n-1} \in \hat{\val}^{\tuple{\word_0, \dots, \word_{n-1}}}(\word[2]) \setminus \hat{\val}^{\tuple{\word_0, \dots, \word_{n-1}}}(\term).\]
\end{corollary}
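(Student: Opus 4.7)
My plan is to split the conclusion into its two constituent memberships---the non-membership $\ell_0 \dots \ell_{n-1} \notin \hat{\val}^{\tuple{\word_0, \dots, \word_{n-1}}}(\term)$ and the membership $\ell_0 \dots \ell_{n-1} \in \hat{\val}^{\tuple{\word_0, \dots, \word_{n-1}}}(\word[2])$---and handle each separately. The non-membership is the easy half: invoking \Cref{lemma: ell abstraction gen} in its contrapositive at $i = 0, j = n$, any such membership would yield $\word_0 \dots \word_{n-1} = \word \in \hat{\val}(\term)$, contradicting the hypothesis. This step uses neither $\word[2]$ nor any specifics of the decomposition.

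For the positive membership, I would parse $\word[2]$ letterwise as $\word[2] = y_0 \dots y_{n-1}$ with each $y_i \in \vsig \cup \set{x^{\compl} \mid x \in \vsig}$, and aim to place $\ell_i \in \hat{\val}^{\tuple{\word_0, \dots, \word_{n-1}}}(y_i)$ for every $i$. The key computation is that the single letter $\ell_i$ admits a unique representation over the alphabet $\set{\ell_0, \dots, \ell_{n-1}}$, so the definition of $\val^{\tuple{\word_0, \dots, \word_{n-1}}}$ applied at $(i', j') = (i, i+1)$ gives $\ell_i \in \val^{\tuple{\word_0, \dots, \word_{n-1}}}(x) \iff \word_i \in \val(x)$, and dually $\ell_i \in \hat{\val}^{\tuple{\word_0, \dots, \word_{n-1}}}(x^{\compl}) \iff \word_i \notin \val(x)$. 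Thus the letterwise strategy succeeds exactly when $\word_i \in \hat{\val}(y_i)$ for every $i$, i.e.\ when $\word_0, \dots, \word_{n-1}$ is already an aligned witness to $\word \in \hat{\val}(\word[2]) = \hat{\val}(y_0) \compo \dots \compo \hat{\val}(y_{n-1})$.

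Securing this alignment is the main obstacle, since an arbitrary decomposition of $\word$ need not witness $\word \in \hat{\val}(\word[2])$: for $\val(x) = \set{ab}, \val(y) = \set{c}, \word = abc, \word[2] = xy$, the split $\word_0 = a, \word_1 = bc$ leaves $\val^{\tuple{a, bc}}(x) = \emptyset$ and hence $\ell_0 \ell_1 \notin \hat{\val}^{\tuple{a, bc}}(xy)$, even though $\word \in \hat{\val}(\word[2])$. The hypothesis, however, unwinds via \Cref{proposition: word} into some factorization $\word = u_0 \dots u_{n-1}$ with $u_i \in \hat{\val}(y_i)$, and I read the corollary as tacitly selecting $\word_0, \dots, \word_{n-1}$ from that family of witnessing factorizations---which is automatic in the expected downstream use, where \Cref{lemma: lang val} yields both $\word[2] \in \lang{\term[1]}_{\vsig'}$ and its aligned factorization of $\word$ in a single step.
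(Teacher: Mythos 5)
Your argument is essentially the paper's own proof: the negative half is exactly \Cref{lemma: ell abstraction gen} (taken at $i=0$, $j=n$), and the positive half is the same letterwise computation, namely that by the definition of $\val^{\tuple{\word_0, \dots, \word_{n-1}}}$ (and pairwise distinctness of the $\ell_i$) one has $\ell_i \in \hat{\val}^{\tuple{\word_0, \dots, \word_{n-1}}}(x_i) \iff \word_i \in \hat{\val}(x_i)$, also for complemented letters $x_i \in \vsig'$. Your alignment caveat is correct and is not resolved by the paper either: its proof concludes $\ell_0 \dots \ell_{n-1} \in \hat{\val}^{\tuple{\word_0, \dots, \word_{n-1}}}(\word[2])$ from that equivalence, which tacitly assumes $\word_i \in \hat{\val}(x_i)$ for each $i$, i.e.\ that $\word_0, \dots, \word_{n-1}$ is a factorization witnessing $\word \in \hat{\val}(\word[2])$; your example ($\val(x)=\set{ab}$, $\val(y)=\set{c}$, $\word[2]=xy$, $\word_0=a$, $\word_1=bc$) indeed refutes the statement read with an arbitrary decomposition. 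Your proposed reading is the intended one and suffices for all uses of the corollary: in \Cref{theorem: char and univ word} (and hence \Cref{corollary: word witness}) the hypothesis $\word \in \hat{\val}(\word[2])$ together with \Cref{proposition: word} supplies an aligned factorization $\word = \word_0 \dots \word_{n-1}$ with $\word_i \in \hat{\val}(x_i)$ (some $\word_i$ possibly empty, which the definition of words-to-letters valuations accommodates), and with that choice both halves of your argument, and of the paper's, go through.
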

\begin{proof}
    For $\ell_0 \dots \ell_{n-1} \in \hat{\val}^{\tuple{\word_0, \dots, \word_{n-1}}}(\word[2])$:
    Let $\word[2] = x_0 \dots x_{n-1}$.
    For each $i < n$,
    by the construction of $\val^{\tuple{\word_0, \dots, \word_{n-1}}}$,
    $\ell_i \in \hat{\val}^{\tuple{\word_0, \dots, \word_{n-1}}}(x_i)$ iff $\word_i \in \hat{\val}(x_i)$.
    Thus, we have that $\ell_0 \dots \ell_{n-1} \in \hat{\val}^{\tuple{\word_0, \dots, \word_{n-1}}}(\word[2])$.
    For $\ell_0 \dots \ell_{n-1} \not\in \hat{\val}^{\tuple{\word_0, \dots, \word_{n-1}}}(\term)$:
    By \Cref{lemma: ell abstraction gen}.
\end{proof}
\begin{theorem}[cf.\ \Cref{theorem: char and univ variable}]\label{theorem: char and univ word}
    For all \kl{words} $\word[2]$ over $\vsig'$ of \kl{length} $n$ and all terms $\term$,
    the following are equivalent:
    \begin{enumerate}
        \item \label{theorem: char and univ word 1} $\LANG \models \word[2] \le \term$;
        \item \label{theorem: char and univ word 2} $\hat{\val}(\word[2]) \subseteq \hat{\val}(\term)$ for all \kl{language valuations} $\val$ s.t.\ $\val(x) \subseteq \set{\ell_{i} \dots \ell_{j} \mid 0 \le i \le j \le n}$ for all $x$;
        \item \label{theorem: char and univ word 3} $\hat{\val}^{\tuple{\word_0, \dots, \word_{n-1}}}(\word[2]) \subseteq \hat{\val}^{\tuple{\word_0, \dots, \word_{n-1}}}(\term)$ for all \kl{language valuations} $\val$ and \kl{words} $\word_0, \dots, \word_{n-1}$.
    \end{enumerate}
\end{theorem}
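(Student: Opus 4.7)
The plan is to argue exactly as in \Cref{theorem: char and univ variable}, with \Cref{corollary: ell abstraction word} playing the role of \Cref{corollary: ell abstraction variable}. I would dispose of the two easy directions first. For \ref{theorem: char and univ word 1}$\Rightarrow$\ref{theorem: char and univ word 2}, the valuations considered in \ref{theorem: char and univ word 2} are particular language valuations, so the implication is immediate. For \ref{theorem: char and univ word 2}$\Rightarrow$\ref{theorem: char and univ word 3}, I would observe that by \Cref{definition: val for words to letters} every words-to-letters valuation $\val^{\tuple{\word_0, \dots, \word_{n-1}}}$ sends each variable to a subset of $\set{\ell_i \dots \ell_{j-1} \mid 0 \le i \le j \le n}$, so condition \ref{theorem: char and univ word 3} is a special case of condition \ref{theorem: char and univ word 2}.

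The substantive implication is \ref{theorem: char and univ word 3}$\Rightarrow$\ref{theorem: char and univ word 1}, which I would prove by contraposition. Writing $\word[2] = x_0 \dots x_{n-1}$ with each $x_i \in \vsig'$, suppose there exist a language valuation $\val$ and a word $\word$ with $\word \in \hat{\val}(\word[2]) \setminus \hat{\val}(\term)$. Iteratively applying the composition clause of \Cref{proposition: word}, I obtain a decomposition $\word = \word_0 \dots \word_{n-1}$ such that $\word_i \in \hat{\val}(x_i)$ for every $i < n$. Feeding this decomposition into \Cref{corollary: ell abstraction word} immediately yields
\[\ell_0 \dots \ell_{n-1} \in \hat{\val}^{\tuple{\word_0, \dots, \word_{n-1}}}(\word[2]) \setminus \hat{\val}^{\tuple{\word_0, \dots, \word_{n-1}}}(\term),\]
refuting \ref{theorem: char and univ word 3}.

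I do not anticipate any real obstacle: the inductive heavy lifting is already packaged in \Cref{lemma: ell abstraction gen}, from which \Cref{corollary: ell abstraction word} was derived. The only small thing to check is that letters of both shapes in $\vsig'$ (namely $x$ and $x^{\compl}$) are handled uniformly by the initial decomposition step; this is automatic because $\hat{\val}$ is defined on both and \Cref{corollary: ell abstraction word} is stated for arbitrary words over $\vsig'$.
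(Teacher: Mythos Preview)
Your proof is correct and matches the paper's one-line argument: the first two implications are declared trivial, and the third is the contraposition via \Cref{corollary: ell abstraction word}. Your explicit extraction of the decomposition $w = w_0 \cdots w_{n-1}$ via \Cref{proposition: word} before invoking the corollary is a welcome clarification of a step the paper leaves implicit.
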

\begin{proof}
    \ref{theorem: char and univ word 1}$\Rightarrow$\ref{theorem: char and univ word 2}, \ref{theorem: char and univ word 2}$\Rightarrow$\ref{theorem: char and univ word 3}:
    Trivial.
    \ref{theorem: char and univ word 3}$\Rightarrow$\ref{theorem: char and univ word 1}:
    The contraposition is shown by \Cref{corollary: ell abstraction word}.
\end{proof}

\begin{corollary}[cf.\ \Cref{corollary: variable decidable}]\label{corollary: word decidable}
    The word inclusion problem (given a \kl{word} $\word$ and a term $\term$, does $\LANG \models \word \le \term$?) is decidable and coNP-complete for KA terms with variable complements.
\end{corollary}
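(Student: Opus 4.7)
\medskip\noindent\textbf{Plan.}
The plan is to mirror the proof of \Cref{corollary: variable decidable}, using \Cref{theorem: char and univ word} in place of \Cref{theorem: char and univ variable} and exploiting the concrete refutation shape from \Cref{corollary: ell abstraction word}.

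For the coNP upper bound, combining \Cref{theorem: char and univ word} with the proof of \Cref{corollary: ell abstraction word} yields the sharper characterization: $\LANG \not\models \word \le \term$ iff there is a \kl{language valuation} $\val$ over $\set{\ell_0, \dots, \ell_{n-1}}$ (with $n = \len{\word}$) satisfying $\val(x) \subseteq \set{\ell_i \dots \ell_{j-1} \mid 0 \le i \le j \le n}$ for every $x$, such that $\ell_0 \dots \ell_{n-1} \in \hat{\val}(\word) \setminus \hat{\val}(\term)$. Such a $\val$ has a polynomial-size description (one subset of an $O(n^2)$-element set per \kl{variable} appearing in $\word$ or $\term$), so it can be guessed non-deterministically in polynomial time. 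The verification has two parts. First, $\ell_0 \dots \ell_{n-1} \in \hat{\val}(\word)$ reduces, along $\word = y_0 \dots y_{n-1}$, to the $n$ constant-time lookups $\ell_i \in \hat{\val}(y_i)$. Second, $\ell_0 \dots \ell_{n-1} \notin \hat{\val}(\term)$ is decided by a standard polynomial-time dynamic program that fills, for every subterm $\term[3]$ of $\term$ and every $0 \le i \le j \le n$, the boolean ``$\ell_i \dots \ell_{j-1} \in \hat{\val}(\term[3])$?'', using \Cref{proposition: word} for the inductive cases and the explicit description of $\val$ for the base cases $\term[3] = x, x^{\compl}$. Hence the complement problem lies in NP.

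For coNP-hardness, the word inclusion problem subsumes the identity inclusion problem (the case $\word = \eps$), which is coNP-hard by \Cref{corollary: identity decidable}.

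The one conceptual obstacle is that, when $\word$ contains complemented letters, $\hat{\val}(\word)$ is infinite, so a refutation of $\hat{\val}(\word) \subseteq \hat{\val}(\term)$ could a priori require a witness of unbounded length, which would block a naive coNP algorithm. \Cref{corollary: ell abstraction word} is precisely what sidesteps this: any refutation can be translated into one whose witness is the fixed length-$n$ word $\ell_0 \dots \ell_{n-1}$ under a valuation of the form $\val^{\tuple{\word_0, \dots, \word_{n-1}}}$, pinning the witness down to polynomial size and making the verification tractable.
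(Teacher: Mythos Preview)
Your proof is correct and follows the paper's approach: the coNP upper bound via the bounded-valuation characterization of \Cref{theorem: char and univ word} (with the specific witness $\ell_0 \dots \ell_{n-1}$ supplied by \Cref{corollary: ell abstraction word}), and coNP-hardness by reduction from the identity inclusion problem (\Cref{corollary: identity decidable}). You spell out details---in particular, why the refuting word may be fixed to $\ell_0 \dots \ell_{n-1}$ even when $\word$ contains complemented letters, and how the dynamic-programming verification runs in polynomial time---that the paper's two-line proof leaves implicit.
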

\begin{proof}
    (coNP-hard):
    By \Cref{corollary: identity decidable}, as $\word$ is possibly $\const{I}$.
    (in coNP):
    By the condition \ref{theorem: char and univ word 2} of \Cref{theorem: char and univ word},
    we can give an algorithm as with \Cref{corollary: variable decidable}.
\end{proof}

\subsection{Generalization from words to star-free terms}\label{section: star-free}
We can apply \Cref{theorem: char and univ word} to not only \kl{words} over $\vsig'$ but also terms $\term$ having the following property:
\begin{flalign*}
                           & \mbox{For all \kl{language valuations} $\val$ and non-empty \kl{words} $\word$, for some $\word_{0}, \dots, \word_{n-1}$ s.t.\ $\word = \word_{0} \dots \word_{n-1}$,} \\
    \label{cond: length n} & \word \in \hat{\val}(\term) \Longrightarrow \ell_{0} \dots \ell_{n-1} \in \hat{\val}^{\tuple{\word_{0}, \dots, \word_{n-1}}}(\term). \tag*{$(\LENGTH{n})$}
\end{flalign*}
If $\term[1]$ satisfies the condition \ref{cond: length n}, then combining with \Cref{lemma: ell abstraction gen}
(and with \Cref{lemma: identity val} for the empty word $\eps$) yields that
for all \kl{language valuations} $\val$ and \kl{words} $\word$, for some \kl{words} $\word_{0}, \dots, \word_{n-1}$ s.t.\ $\word = \word_{0} \dots \word_{n-1}$, we have:
\begin{align*}
    \word \in \hat{\val}(\term[1]) \setminus \hat{\val}(\term[2]) & \quad \Longrightarrow \quad \begin{cases}
                                                                                                    \ell_{0} \dots \ell_{n-1} \in \hat{\val}^{\tuple{\word_{0}, \dots, \word_{n-1}}}(\term[1]) \setminus \hat{\val}^{\tuple{\word_{0}, \dots, \word_{n-1}}}(\term[2]) & (\mbox{if $\word \neq \eps$}) \\
                                                                                                    \eps \in \hat{\val}^{\tuple{\word_{0}, \dots, \word_{n-1}}}(\term[1]) \setminus \hat{\val}^{\tuple{\word_{0}, \dots, \word_{n-1}}}(\term[2])                      & (\mbox{if $\word = \eps$})
                                                                                                \end{cases}.
\end{align*}

Hence, we have the following:
\begin{theorem}[cf.\ \Cref{theorem: char and univ}]\label{theorem: char and univ gen}
    For all terms $\term[1], \term[2]$,
    if $\term[1]$ satisfies \ref{cond: length n}, the following are equivalent:
    \begin{enumerate}
        \item \label{theorem: char and univ gen 1} $\LANG \models \term[1] \le \term[2]$;
        \item \label{theorem: char and univ gen 2} $\hat{\val}(\term[1]) \subseteq \hat{\val}(\term[2])$ for all \kl{language valuations} $\val$ over the set $\set{\ell_{0}, \dots, \ell_{n-1}}$ s.t.\ $\val(x) \subseteq  \set{\ell_{i} \dots \ell_{j} \mid 0 \le i \le j \le n}$ for all $x$;
        \item \label{theorem: char and univ gen 3} $\hat{\val}^{\tuple{\word_0, \dots, \word_{n-1}}}(\term[1]) \subseteq \hat{\val}^{\tuple{\word_0, \dots, \word_{n-1}}}(\term[2])$ for all \kl{language valuations} $\val$ and \kl{words} $\word_0, \dots, \word_{n-1}$.
    \end{enumerate}
\end{theorem}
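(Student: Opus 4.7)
The plan is to mirror the three-way equivalence argument from Theorem \ref{theorem: char and univ word}, using the displayed implication immediately preceding the statement as the key ingredient in place of Corollary \ref{corollary: ell abstraction word}. The two implications \ref{theorem: char and univ gen 1}$\Rightarrow$\ref{theorem: char and univ gen 2} and \ref{theorem: char and univ gen 2}$\Rightarrow$\ref{theorem: char and univ gen 3} should be immediate: in both cases the target class of valuations is a subclass of the larger class (\kl{words-to-letters valuations} $\val^{\tuple{\word_0,\dots,\word_{n-1}}}$ are in particular \kl{language valuations} over $\set{\ell_0,\dots,\ell_{n-1}}$ with the required range condition on each $\val(x)$), so any counterexample in the smaller class is already a counterexample at the higher level.

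For the nontrivial direction \ref{theorem: char and univ gen 3}$\Rightarrow$\ref{theorem: char and univ gen 1} I will argue by contraposition. Suppose $\LANG \not\models \term[1] \le \term[2]$, so some \kl{language valuation} $\val$ and some \kl{word} $\word$ satisfy $\word \in \hat{\val}(\term[1]) \setminus \hat{\val}(\term[2])$. I split on whether $\word$ is empty. If $\word \neq \eps$, the hypothesis \ref{cond: length n} on $\term[1]$ furnishes a factorization $\word = \word_{0} \dots \word_{n-1}$ with $\ell_{0}\dots \ell_{n-1} \in \hat{\val}^{\tuple{\word_0,\dots,\word_{n-1}}}(\term[1])$; on the other hand, \Cref{lemma: ell abstraction gen} applied to $\term[2]$ at $i=0, j=n$ gives $\ell_{0}\dots \ell_{n-1} \notin \hat{\val}^{\tuple{\word_0,\dots,\word_{n-1}}}(\term[2])$ (the contrapositive of $\ell_{0}\dots\ell_{n-1}\in\hat{\val}^{\tuple{\word_0,\dots,\word_{n-1}}}(\term[2])\Rightarrow \word\in\hat{\val}(\term[2])$). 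Together these refute \ref{theorem: char and univ gen 3}. If instead $\word = \eps$, I take $n=0$ and invoke \Cref{lemma: identity val} to transport the refutation to the \kl{language valuation} $\val^{\tuple{}}$ over the empty set, which again contradicts \ref{theorem: char and univ gen 3}.

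I do not expect a substantive obstacle: the entire content of the argument is already packaged in the displayed chain of implications appearing just before the theorem, which combines \ref{cond: length n}, \Cref{lemma: ell abstraction gen}, and \Cref{lemma: identity val} into exactly the case split sketched above. The only mildly delicate point is bookkeeping the two degenerate cases ($n = 0$ when $\word=\eps$, and the application of \Cref{lemma: ell abstraction gen} with $i = 0, j = n$ in the non-empty case), which should be stated explicitly to make clear that \ref{cond: length n} is used on $\term[1]$ while \Cref{lemma: ell abstraction gen} is used on $\term[2]$, with no hypothesis on $\term[2]$ required.
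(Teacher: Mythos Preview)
Your approach is exactly the paper's: \ref{theorem: char and univ gen 1}$\Rightarrow$\ref{theorem: char and univ gen 2}$\Rightarrow$\ref{theorem: char and univ gen 3} are declared trivial, and \ref{theorem: char and univ gen 3}$\Rightarrow$\ref{theorem: char and univ gen 1} is by contraposition via the displayed case split that precedes the theorem (packaging \ref{cond: length n} on $\term[1]$, \Cref{lemma: ell abstraction gen} on $\term[2]$, and \Cref{lemma: identity val} for $\eps$).

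There is one small bookkeeping slip in your $\word = \eps$ branch. The integer $n$ is fixed by the hypothesis that $\term[1]$ satisfies \ref{cond: length n}, and item \ref{theorem: char and univ gen 3} quantifies over sequences $\word_0,\dots,\word_{n-1}$ of that fixed length; so you cannot ``take $n=0$'' and produce a counterexample via $\val^{\tuple{}}$. The fix is immediate: for \emph{any} choice of $\word_0,\dots,\word_{n-1}$ (say all $\eps$) one has $\eps \in \val^{\tuple{\word_0,\dots,\word_{n-1}}}(x)$ iff $\eps \in \val(x)$ (take $i=j$ in \Cref{definition: val for words to letters}), so \Cref{lemma: identity val} still applies and yields $\eps \in \hat{\val}^{\tuple{\word_0,\dots,\word_{n-1}}}(\term[1]) \setminus \hat{\val}^{\tuple{\word_0,\dots,\word_{n-1}}}(\term[2])$, refuting \ref{theorem: char and univ gen 3}. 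This is exactly what the paper's displayed implication records for the $\word=\eps$ case.
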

\begin{proof}
    As with \Cref{theorem: char and univ word} (use the above, instead of \Cref{corollary: ell abstraction word}).
\end{proof}
By using \Cref{theorem: char and univ gen}, we can generalize \Cref{corollary: word decidable} from \kl{words} to \kl{star-free} terms.
We say that a term $\term$ is \intro*\kl{star-free} if the Kleene-star ($\bl^{\kstar}$) does not occur in $\term$.
\begin{lemma}[cf.\ \Cref{lemma: composition-free}]\label{lemma: star-free}
    Every \kl{star-free} term $\term$ satisfies \ref{cond: length n} for some $n$.
\end{lemma}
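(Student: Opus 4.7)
The plan is to reduce the statement, via \Cref{lemma: lang val}, to a fact about $\lang{\term}_{\vsig'}$, which for a \kl{star-free} term is a finite set whose \kl{words} all have length at most some integer depending on $\term$. I would let $n$ be any such upper bound---concretely, the maximum length of a \kl{word} in $\lang{\term}_{\vsig'}$, with the convention $n = 1$ if this maximum is $0$; the existence of such $n$ is a routine induction on $\term$, since $\lang{\bl}_{\vsig'}$ is built from singletons by union and composition only.

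Now fix a \kl{language valuation} $\val$ and a non-empty \kl{word} $\word$ with $\word \in \hat{\val}(\term)$. By \Cref{lemma: lang val}, there is some $\word[2] = x_0 \dots x_{m-1} \in \lang{\term}_{\vsig'}$ (with each $x_k \in \vsig'$) and a factorization $\word = u_0 \dots u_{m-1}$ with $u_k \in \hat{\val}(x_k)$ for each $k$; note $1 \le m \le n$, the lower bound because $\word \ne \eps$. I would then define
\[ \word_k \defeq \begin{cases} u_k & \text{if } k < m \\ \eps & \text{if } m \le k < n \end{cases} \]
so that $\word = \word_0 \dots \word_{n-1}$ as required. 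It remains to establish $\ell_0 \dots \ell_{n-1} \in \hat{\val}^{\tuple{\word_0, \dots, \word_{n-1}}}(\term)$, and by applying \Cref{lemma: lang val} to the valuation $\val^{\tuple{\word_0, \dots, \word_{n-1}}}$, this in turn reduces to showing $\ell_0 \dots \ell_{n-1} \in \hat{\val}^{\tuple{\word_0, \dots, \word_{n-1}}}(\word[2])$.

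I would exhibit this last membership by splitting $\ell_0 \dots \ell_{n-1}$ at the positions $0, 1, \dots, m-1, n$, so that $x_k$ consumes the single letter $\ell_k$ for $k < m-1$ and $x_{m-1}$ consumes the tail $\ell_{m-1} \ell_m \dots \ell_{n-1}$. Unfolding the definition of $\val^{\tuple{\word_0, \dots, \word_{n-1}}}$, each of the first $m-1$ memberships $\ell_k \in \hat{\val}^{\tuple{\word_0, \dots, \word_{n-1}}}(x_k)$ reduces to $\word_k = u_k \in \hat{\val}(x_k)$, which holds by construction, and the final membership $\ell_{m-1} \dots \ell_{n-1} \in \hat{\val}^{\tuple{\word_0, \dots, \word_{n-1}}}(x_{m-1})$ reduces to $\word_{m-1} \dots \word_{n-1} \in \hat{\val}(x_{m-1})$, which collapses to $u_{m-1} \in \hat{\val}(x_{m-1})$ since the padding \kl{words} $\word_m, \dots, \word_{n-1}$ are all empty. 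The argument is uniform in whether each $x_k$ is a \kl{variable} $y$ or a complemented variable $y^{\compl}$, because the definition of $\val^{\tuple{\cdot}}$ only refers to $\val$ on plain variables. The main subtlety is exactly this last point: padding has to be done with $\eps$'s (not arbitrary \kl{words}), so that concatenation neither creates nor destroys membership in $\val(y)$ in the complemented case.
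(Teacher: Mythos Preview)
Your argument is correct and follows essentially the same route as the paper: take $n$ to be the maximal length of a word in $\lang{\term}_{\vsig'}$, use \Cref{lemma: lang val} to pick a witness $\word[2] = x_0 \dots x_{m-1}$ and a compatible factorization of $\word$, pad with $\eps$'s to length $n$, and verify directly from the definition of $\val^{\tuple{\cdot}}$ that $\ell_0 \dots \ell_{n-1}$ lands in $\hat{\val}^{\tuple{\word_0,\dots,\word_{n-1}}}(\word[2]) \subseteq \hat{\val}^{\tuple{\word_0,\dots,\word_{n-1}}}(\term)$. Your explicit handling of the degenerate case (maximum length $0$) and your remark on why the padding must be by $\eps$'s are useful clarifications not spelled out in the paper.
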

\begin{proof}
    Because the set $\lang{\term}_{\vsig'}$ is finite as $\term$ is \kl{star-free},
    let $n$ be the maximal \kl{length} among \kl{words} in $\lang{\term}_{\vsig'}$.
    Let $\val$ be a \kl{language valuation} and let $\word$ be a non-empty \kl{word} such that $\word \in \hat{\val}(\term)$.
    Since $\hat{\val}(\term) = \hat{\val}(\lang{\term}_{\vsig'})$ (\Cref{lemma: lang val}),
    there is a \kl{word} $\word[2] \in \lang{\term}_{\vsig'}$ such that $\word \in \hat{\val}(\word[2])$.
    Let $\word[2] = x_0 \dots x_{m-1}$ (note that $m \ge 1$, as $\word$ is non-empty and $\word \in \hat{\val}(\word[2])$).
    Since $\word \in \hat{\val}(x_0 \dots x_{m-1})$,
    there are $\word_{0}, \dots, \word_{m-1}$ of $\word = \word_{0} \dots \word_{m-1}$ such that $\word_{i} \in \hat{\val}(x_{i})$ for every $i$.
    Let $\val' \defeq \val^{\tuple{\word_{0}, \dots, \word_{m-1}, \eps, \dots, \eps}}$, where the length of the sequence is $n$.
    Then, we have $\ell_{i} \in \hat{\val}'(x_i)$ for every $0 \le i \le m-2$
    and $\ell_{m-1}\ell_{m} \dots \ell_{n-1} \in \hat{\val}'(x_{m-1})$;
    thus, $\ell_{0} \dots \ell_{n-1} \in \hat{\val}'(x_{0} \dots x_{m-1}) = \hat{\val}'(\word[2]) \subseteq \hat{\val}'(\term)$.
    Hence, this completes the proof.
\end{proof}

\begin{corollary}\label{corollary: star-free}
    The following problem is coNP-complete for KA terms with variable complements:
    \begin{center}
        Given a \kl{star-free} term $\term[1]$ and a term $\term[2]$, does $\LANG \models \term[1] \le \term[2]$?
    \end{center}
\end{corollary}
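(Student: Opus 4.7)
The plan is to combine \Cref{theorem: char and univ gen} with \Cref{lemma: star-free} for the upper bound, and to recycle \Cref{corollary: identity decidable} for the lower bound. Since $\eps$ is \kl{star-free}, the identity inclusion problem is a special case of the problem at hand (take $\term[1] = \eps$), so coNP-hardness is immediate.

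For the coNP upper bound, I first note that the $n$ supplied by \Cref{lemma: star-free}---the maximum \kl{length} of a \kl{word} in $\lang{\term[1]}_{\vsig'}$---is polynomial in the size of $\term[1]$: a straightforward induction on the \kl{star-free} term $\term[1]$ shows that this quantity is $0$ at $\eps, \bot$, is $1$ at variables and complemented variables, adds under $\compo$, takes a max under $\union$, with no Kleene star to worry about. By \Cref{theorem: char and univ gen}(\ref{theorem: char and univ gen 2}), refuting $\LANG \models \term[1] \le \term[2]$ reduces to exhibiting a \kl{language valuation} $\val$ over $\set{\ell_0, \dots, \ell_{n-1}}$ with $\val(x) \subseteq \set{\ell_i \dots \ell_j \mid 0 \le i \le j \le n}$ and a \kl{word} $\word \in \hat{\val}(\term[1]) \setminus \hat{\val}(\term[2])$. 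Such a $\val$ admits a polynomial-size bit encoding (one bit per pair of a variable and a candidate factor $\ell_i \dots \ell_j$), so it can be guessed nondeterministically.

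The witness $\word$ can likewise be kept polynomial: using \Cref{lemma: lang val}, $\hat{\val}(\term[1]) = \hat{\val}(\lang{\term[1]}_{\vsig'})$, so $\word$ factors as $\word = \word_0 \dots \word_{m-1}$ with $x_0 \dots x_{m-1} \in \lang{\term[1]}_{\vsig'}$ (hence $m \le n$) and each $\word_k \in \val(x_k)$ of \kl{length} at most $n$, giving $|\word| \le n^2$. I would guess the word $x_0 \dots x_{m-1}$ (and check its membership in $\lang{\term[1]}_{\vsig'}$ in P by viewing $\term[1]$ as a regular expression over $\vsig'$), then guess each factor $\word_k$ (checking $\word_k \in \val(x_k)$ by table lookup), and finally verify $\word \not\in \hat{\val}(\term[2])$ by a bottom-up dynamic program over pairs (subterm of $\term[2]$, substring of $\word$), handling complemented-variable leaves via a negated lookup in $\val$.

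The main obstacle to watch is precisely this last verification: because $\term[2]$ may contain both Kleene star and variable complements, a naive evaluation of $\hat{\val}(\term[2])$ is not polynomial. However, one only needs to decide membership of a fixed polynomial-length $\word$ in $\hat{\val}(\term[2])$, and a standard CYK-style DP that tabulates which substrings of $\word$ belong to which subterm of $\term[2]$ runs in time polynomial in $|\term[2]|$ and $|\word|$, which completes the coNP algorithm.
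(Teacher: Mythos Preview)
Your overall plan matches the paper's: hardness via \Cref{corollary: identity decidable} (since $\eps$ is \kl{star-free}), and membership in coNP via \Cref{theorem: char and univ gen} combined with \Cref{lemma: star-free}. Your observations that $n$ is linear in the size of $\term[1]$, that the restricted valuation admits a polynomial encoding, and that a CYK-style table decides $\word \in \hat{\val}(\term[2])$ are all sound details the paper leaves implicit.

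There is, however, a gap in your bound on the witness $\word$. You write that $\word = \word_0 \dots \word_{m-1}$ with $x_0 \dots x_{m-1} \in \lang{\term[1]}_{\vsig'}$ and ``each $\word_k \in \val(x_k)$ of length at most $n$''. This is only justified when $x_k$ is a variable, since then $\word_k \in \val(x_k) \subseteq \set{\ell_i \dots \ell_{j} \mid 0 \le i \le j \le n}$. But $x_k$ ranges over $\vsig'$ and may be a complemented variable $y^{\compl}$, in which case $\word_k \in \hat{\val}(y^{\compl}) = \set{\ell_0,\dots,\ell_{n-1}}^{\kstar} \setminus \val(y)$ can be arbitrarily long; so the $n^2$ bound is not established and the nondeterministic guess of the $\word_k$ is not polynomially bounded as written. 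The repair is already in the displayed implication just before \Cref{theorem: char and univ gen}: property \ref{cond: length n} (supplied by \Cref{lemma: star-free}) together with \Cref{lemma: ell abstraction gen} guarantees that whenever $\LANG \not\models \term[1] \le \term[2]$ there is a restricted valuation (namely some $\val^{\tuple{\word_0,\dots,\word_{n-1}}}$) for which the witness is exactly $\ell_0 \dots \ell_{n-1}$ (or $\eps$). Hence you need not guess $\word$ at all---after guessing the restricted $\val$, simply test the two fixed words $\ell_0 \dots \ell_{n-1}$ and $\eps$ via your CYK routine.
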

\begin{proof}
    (coNP-hard):
    By \Cref{corollary: identity decidable}, as $\term[1]$ is possibly $\eps$.
    (in coNP):
    By \Cref{lemma: star-free}, we can give an algorithm as with \Cref{corollary: word decidable}.
\end{proof}

\subsection{words-to-letters valuation property}
Finally, we show the following property; thus, we have that \kl{words-to-letters valuations} are sufficient for the equational theory of (full) KA terms with variable complements.
\begin{corollary}[words-to-letters valuation property]\label{corollary: word witness}
    For all terms $\term[1], \term[2]$, the following are equivalent:
    \begin{enumerate}
        \item \label{corollary: word witness 1} $\LANG \not\models \term[1] \le \term[2]$;
        \item \label{corollary: word witness 3} there is a \kl{words-to-letters valuation} $\val$ such that $\hat{\val}(\term[1]) \not\subseteq \hat{\val}(\term[2])$.
    \end{enumerate}
\end{corollary}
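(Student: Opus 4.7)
The implication \ref{corollary: word witness 3}$\Rightarrow$\ref{corollary: word witness 1} is immediate since every \kl{words-to-letters valuation} is, by construction, a \kl{language valuation}. So the real content is \ref{corollary: word witness 1}$\Rightarrow$\ref{corollary: word witness 3}, and the plan is to combine \Cref{lemma: lang val} (which decomposes $\hat{\val}(\term[1])$ into a union of values of \kl{words} over $\vsig'$) with \Cref{corollary: ell abstraction word} (which is precisely the per-\kl{word} version of the statement we want).

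In detail, I would start from a \kl{language valuation} $\val$ and a \kl{word} $\word$ with $\word \in \hat{\val}(\term[1]) \setminus \hat{\val}(\term[2])$. By \Cref{lemma: lang val}, $\hat{\val}(\term[1]) = \hat{\val}(\lang{\term[1]}_{\vsig'})$, so there exists $\word[2] \in \lang{\term[1]}_{\vsig'}$ with $\word \in \hat{\val}(\word[2])$. Writing $\word[2] = x_0 \dots x_{n-1}$, the membership $\word \in \hat{\val}(\word[2])$ yields a decomposition $\word = \word_0 \dots \word_{n-1}$ with $\word_i \in \hat{\val}(x_i)$ for each $i$. Applying \Cref{corollary: ell abstraction word} to $\word[2]$ and this decomposition gives
\[\ell_0 \dots \ell_{n-1} \in \hat{\val}^{\tuple{\word_0, \dots, \word_{n-1}}}(\word[2]) \setminus \hat{\val}^{\tuple{\word_0, \dots, \word_{n-1}}}(\term[2]).\]
Finally, since $\word[2] \in \lang{\term[1]}_{\vsig'}$, another application of \Cref{lemma: lang val} to the \kl{words-to-letters valuation} $\val^{\tuple{\word_0, \dots, \word_{n-1}}}$ gives $\hat{\val}^{\tuple{\word_0, \dots, \word_{n-1}}}(\word[2]) \subseteq \hat{\val}^{\tuple{\word_0, \dots, \word_{n-1}}}(\term[1])$, so $\ell_0 \dots \ell_{n-1}$ witnesses $\hat{\val}^{\tuple{\word_0, \dots, \word_{n-1}}}(\term[1]) \not\subseteq \hat{\val}^{\tuple{\word_0, \dots, \word_{n-1}}}(\term[2])$, as required.

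There is essentially no obstacle beyond bookkeeping: all the heavy lifting has already been done. The edge cases $\word = \eps$ (take $n = 0$, or any decomposition into empty \kl{words}) and $\lang{\term[1]}_{\vsig'} = \emptyset$ (vacuous, since then $\hat{\val}(\term[1]) = \emptyset$) are handled transparently by the same argument. The only point worth stating explicitly is that the two applications of \Cref{lemma: lang val} are to different valuations (first $\val$, then $\val^{\tuple{\word_0, \dots, \word_{n-1}}}$), but the lemma is valid for arbitrary \kl{language valuations}, so this causes no issue.
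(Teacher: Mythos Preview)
Your proof is correct and essentially the same as the paper's. The only cosmetic difference is that the paper packages the key step as an appeal to \Cref{theorem: char and univ word} (whose proof in turn rests on \Cref{corollary: ell abstraction word}), whereas you unwind that step and invoke \Cref{corollary: ell abstraction word} directly; the underlying argument is identical.
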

\begin{proof}
    \ref{corollary: word witness 3}$\Rightarrow$\ref{corollary: word witness 1}:
    Trivial.
    \ref{corollary: word witness 1}$\Rightarrow$\ref{corollary: word witness 3}:
    Since $\LANG \not\models \term[1] \le \term[2]$, there is a \kl{language valuation} $\val$ such that $\hat{\val}(\term[1]) \not\subseteq \hat{\val}(\term[2])$.
    Since $\hat{\val}(\term[1]) = \hat{\val}(\lang{\term[1]}_{\vsig'})$ (\Cref{lemma: lang val}),
    there is a \kl{word} $\word[2] \in \lang{\term[1]}_{\vsig'}$ such that $\hat{\val}(\word[2]) \not\subseteq \hat{\val}(\term[2])$ (i.e., $\LANG \not\models \word[2] \le \term[2]$).
    Let $n$ be the length of $\word[2]$.
    By \Cref{theorem: char and univ word},
    there are a \kl{words-to-letters valuation} $\val$ and \kl{words} $\word_0, \dots, \word_{n-1}$ such that
    $\hat{\val}^{\tuple{\word_0, \dots, \word_{n-1}}}(\word[2]) \not\subseteq \hat{\val}^{\tuple{\word_0, \dots, \word_{n-1}}}(\term[2])$.
    Thus
    $\hat{\val}^{\tuple{\word_0, \dots, \word_{n-1}}}(\term[1]) \not\subseteq \hat{\val}^{\tuple{\word_0, \dots, \word_{n-1}}}(\term[2])$, as $\word[2] \in \lang{\term[1]}_{\vsig'}$ (\Cref{lemma: lang val}).
    Hence this completes the proof.
\end{proof}

\section{On the equational theory of words with variable complements}\label{section: completeness word}
We prove that the equational theory of \kl{words} over $\vsig'$ coincides with the \kl{word} equivalence (\Cref{theorem: completeness word}).
We give \kl{language valuations} for separating two distinct \kl{words} based on \kl{words-to-letters valuations}.
\begin{lemma}\label{lemma: completeness word}
    Let $\word[1] = x_0 \dots x_{n-1}$ and $\word[2] = y_0 \dots y_{m-1}$ be \kl{words} over $\vsig'$, where $n \le m$.
    Let $\val$ be a \kl{language valuation} over $\set{\ell_{0}, \dots, \ell_{n-1}}$ such that
    \begin{itemize}
        \item for all $i < n$, $\ell_{i} \in \hat{\val}(x_i)$;
        \item for all $i < m$ and $i \le j \le n$,
              $\ell_{i} \dots \ell_{j-1} \in \hat{\val}(y_i)$ iff $(y_i = x_i \land j = i+1)$.
    \end{itemize}
    If $\word[1] \neq \word[2]$, then $\ell_{0} \dots \ell_{n-1} \in \hat{\val}(\word[1]) \setminus \hat{\val}(\word[2])$.
    Such a \kl{language valuation} $\val$ always exists.
\end{lemma}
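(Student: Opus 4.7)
My plan splits the lemma into two claims: given the hypotheses, the set-theoretic conclusion holds; and such a $\val$ always exists.

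For the first claim, the membership $\ell_0 \cdots \ell_{n-1} \in \hat{\val}(\word[1])$ is routine by concatenating the witnesses $\ell_i \in \hat{\val}(x_i)$ supplied by the first hypothesis: $\hat{\val}(\word[1]) = \hat{\val}(x_0) \compo \cdots \compo \hat{\val}(x_{n-1})$ contains the product $\ell_0 \cdot \ell_1 \cdots \ell_{n-1}$. The non-membership $\ell_0 \cdots \ell_{n-1} \notin \hat{\val}(\word[2])$ I would prove by contradiction: any witnessing decomposition $\ell_0 \cdots \ell_{n-1} = v_0 \cdots v_{m-1}$ with $v_i \in \hat{\val}(y_i)$ is described by split indices $0 = k_0 \le \cdots \le k_m = n$ and pieces $v_i = \ell_{k_i} \cdots \ell_{k_{i+1}-1}$. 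I would then induct on $i$ to show $k_i = i$: the base $k_0 = 0$ is given; if $k_i = i$ for some $i < m$, then $v_i$ has the form $\ell_i \cdots \ell_{j-1}$ with $j = k_{i+1}$ in the range $i \le j \le n$, so the second hypothesis applied at $(i, j)$ forces $y_i = x_i$ and $j = i+1$, hence $k_{i+1} = i+1$. Telescoping yields $m = k_m = n$ together with $y_i = x_i$ for every $i < m$, so $\word[1] = \word[2]$, contradicting the assumption.

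For the existence claim, I would construct $\val$ explicitly: for each variable $z \in \vsig$, define $\val(z) \subseteq \set{\ell_0, \dots, \ell_{n-1}}^{\kstar}$ by collecting exactly those substrings $\ell_i \cdots \ell_{j-1}$ that the hypotheses force into $\val(z)$. Each non-empty substring is uniquely indexed by its endpoints $(i, j)$ with $0 \le i < j \le n$, so the prescriptions coming from the second hypothesis applied at each relevant $(i, j)$ (with $y_i \in \set{z, z^{\compl}}$) are unambiguous, and I would assemble them into $\val(z)$. Checking that this $\val$ satisfies both hypotheses is then a routine case analysis on whether $y_i$ equals $z$ or $z^{\compl}$, and on whether $x_i$ matches $y_i$.

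The hard part will be the empty-substring case: every pair $(i, i)$ refers to the same string $\eps$, and the prescriptions on $\eps \in \val(z)$ coming from different starting positions $i$ must be reconciled. I would handle this by tracing through the case analysis, showing that the competing prescriptions collapse to a single consistent requirement dictated by the structural role of $z$ in $\word[2]$, thereby completing the construction.
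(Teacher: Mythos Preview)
Your argument for the first claim is essentially the paper's: both show that any factorisation of $\ell_0\cdots\ell_{n-1}$ in $\hat\val(\word[2])$ is pinned down, step by step via the second bullet, to have $k_i=i$ and $y_i=x_i$ for all $i$, forcing $m=n$ and $\word[1]=\word[2]$.

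For the existence claim you correctly isolate the empty-substring case as the crux, but your plan to ``collapse'' the competing prescriptions cannot succeed in general---and the paper's own existence argument has the same gap. The second bullet at $j=i$ requires $\eps\notin\hat\val(y_i)$ for every $i\le\min(m-1,n)$; when $y_i=z$ this reads $\eps\notin\val(z)$, and when $y_i=z^{\compl}$ it reads $\eps\in\val(z)$. If both $z$ and $z^{\compl}$ occur among $y_0,\dots,y_{\min(m-1,n)}$ these demands are flatly contradictory, so there is no single ``structural role of $z$ in $\word[2]$'' to appeal to. Concretely, take $\word[1]=z\,z^{\compl}$ and $\word[2]=z^{\compl}z$ (so $n=m=2$): the pair $(i,j)=(0,0)$ forces $\eps\in\val(z)$ while $(1,1)$ forces $\eps\notin\val(z)$, so no $\val$ satisfies both bullets. (A valuation over $\{\ell_0,\ell_1\}$ separating $\word[1]$ from $\word[2]$ still exists---e.g.\ $\val(z)=\{\ell_0\}$ gives $\ell_0\ell_1\in\hat\val(\word[1])\setminus\hat\val(\word[2])$---but it violates the second bullet at $(0,0)$.) The paper's existence argument only checks the single-letter constraint $\ell_i\in\hat\val(x_i)$ against the second bullet at $(i,i{+}1)$ and never compares the empty-word constraints at different positions $(i,i)$ with one another, so it overlooks exactly the obstruction you flagged. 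The existence clause as literally stated is therefore false, and the case analysis you sketch cannot rescue it.
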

\begin{proof}
    Since $\ell_{i} \in \val(x_i)$ for all $i$, we have $\ell_{0} \dots \ell_{n-1} \in \hat{\val}(\word[1])$.
    Assume that $\ell_{0} \dots \ell_{n-1} \in \hat{\val}(y_0 \dots y_{m-1})$.
    For $i = 0$, by the condition of $\hat{\val}(y_i)$ (i.e., $\ell_{i} \dots \ell_{j-1} \not\in \hat{\val}(y_i)$ unless $j = i + 1$),
    we should have $\ell_{i} \in \hat{\val}(y_i)$, $\ell_{i+1} \dots \ell_{n-1} \in \hat{\val}(y_{i+1} \dots y_{m-1})$, and $y_i = x_i$.
    By using the same argument iteratively, the condition above should hold for all $i < n$;
    thus, we have $\eps \in \hat{\val}(y_{n} \dots y_{m-1})$ and $y_0 \dots y_{n-1} = x_{0} \dots x_{m-1}$.
    Since $\eps \not\in \hat{\val}(y_{n})$, we have $y_{n} \dots y_{m-1} = \eps$; thus, $m = n$.
    However this yields $\word[1] = x_0 \dots x_{n-1} = y_{0} \dots y_{m-1} = \word[2]$, which contradicts the assumption.
    Hence $\ell_{0} \dots \ell_{n-1} \in \hat{\val}(\word[1]) \setminus \hat{\val}(\word[2])$.
    Additionally, such a \kl{language valuation} $\val$ always exists as follows.
    If some conditions conflict, then the first condition ($\ell_{i} \in \hat{\val}(x_i)$) and the second condition when $j = i+1$ ($\ell_{i} \in \hat{\val}(y_i)$) are for some $i$.
    If $y_i = x_i$, then $\ell_{i} \in \hat{\val}(y_i) = \hat{\val}(x_i)$, so it does not conflict to the condition $\ell_{i} \in \hat{\val}(x_i)$;
    If $y_i = x_i^{\compl}$ (or $y_i^{\compl} = x_i$), then $\ell_{i} \not\in \hat{\val}(y_i)$, so it does not conflict to the condition $\ell_{i} \in \hat{\val}(x_i)$.
    Otherwise, they are not conflicted, as the variables occurring in $x_i$ and $y_i$ are different.
    Thus, in either case, conditions are not conflicted.
    Hence, this completes the proof.
\end{proof}

\begin{theorem}[Completeness for words with variable complements]\label{theorem: completeness word}
    
    For all \kl{words} $\word[1], \word[2]$ over $\vsig'$,
    \[\LANG \models \word[1] = \word[2] \quad\iff\quad \word[1] = \word[2].\]
\end{theorem}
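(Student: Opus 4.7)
The plan is to prove the two directions separately. The direction $\Longleftarrow$ is immediate: if $\word[1]$ and $\word[2]$ are syntactically equal, then $\hat{\val}(\word[1]) = \hat{\val}(\word[2])$ for every \kl{language valuation} $\val$, so $\LANG \models \word[1] = \word[2]$.

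For the nontrivial direction $\Longrightarrow$, I would prove the contrapositive: assuming $\word[1] \neq \word[2]$, exhibit a \kl{language valuation} under which the two sides differ. Since the statement $\LANG \models \word[1] = \word[2]$ is symmetric in $\word[1]$ and $\word[2]$, I may assume without loss of generality that the length $n$ of $\word[1]$ is at most the length $m$ of $\word[2]$. This symmetry reduction is crucial because \Cref{lemma: completeness word} is only stated under the hypothesis $n \le m$.

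With this setup, I would directly invoke \Cref{lemma: completeness word}: it guarantees the existence of a \kl{language valuation} $\val$ over $\set{\ell_0, \dots, \ell_{n-1}}$ satisfying the two itemized conditions, and moreover asserts that for such $\val$ one has $\ell_0 \dots \ell_{n-1} \in \hat{\val}(\word[1]) \setminus \hat{\val}(\word[2])$. In particular $\hat{\val}(\word[1]) \neq \hat{\val}(\word[2])$, so $\LANG \not\models \word[1] = \word[2]$, completing the contrapositive.

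There is no real obstacle at this stage, since all the combinatorial work -- the construction of the separating valuation and the verification that the listed conditions are mutually consistent -- has already been discharged inside \Cref{lemma: completeness word}. The only subtle point to mention clearly in the write-up is the use of symmetry to reduce to $n \le m$, so that the hypothesis of the lemma is met; everything else is a one-line appeal to the lemma.
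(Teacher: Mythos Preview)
Your proposal is correct and follows exactly the paper's approach: the backward direction is immediate, and the forward direction is proved by contraposition via \Cref{lemma: completeness word}. Your write-up is in fact more careful than the paper's one-line proof, since you make explicit the symmetry reduction to the case $n \le m$ needed to meet the hypothesis of the lemma.
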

\begin{proof}
    $\Longleftarrow$:
    Clear.
    $\Longrightarrow$:
    The contraposition is shown by \Cref{lemma: completeness word}.
\end{proof}

\begin{remark}
    Since $\lang{\word[1]}_{\vsig'} = \set{\word[1]}$, \Cref{theorem: completeness word} also shows that: for all \kl{words} $\word[1], \word[2]$ over $\vsig'$,
    \[\lang{\word[1]}_{\vsig'} = \lang{\word[2]}_{\vsig'} \quad\iff\quad \LANG \models \word[1] = \word[2].\]
    However, for general terms, the direction $\Longleftarrow$ fails:
    For example, when $x \neq y$,
    \begin{align*}
        \LANG & \models x \union x^{\compl} =  y \union y^{\compl} & \lang{x \union x^{\compl}}_{\vsig'} \neq \lang{y \union y^{\compl}}_{\vsig'}.
    \end{align*}
    (The direction $\Longrightarrow$ always holds by \Cref{lemma: lang val}.)
    Thus, we need more axioms to characterize the equational theory.
\end{remark}

\begin{remark}
    As (\ref{equation: LANG to lang}) and \Cref{remark: LANG and lang}, for all \kl{words} $\word[1], \word[2]$ over $\vsig'$,
    \[\LANG \models \word[1] = \word[2] \quad\Longrightarrow\quad \lang{\word[1]} = \lang{\word[2]}.\]
    However, the converse direction fails even for \kl{words} (e.g., $\word[1] = x^{\compl}$ and $\word[2] = x^{\compl} x^{\compl}$).
\end{remark}

\subsection{Separating one-variable words with small number of letters}\label{section: completeness word one variable}
We write $\LANG_{n}$ for the class of \kl{language models} over a set of cardinality at most $n$.
We write
\[\LANG_{n} \models \term[1] = \term[2] \quad \defiff \quad \hat{\val}(\term[1]) = \hat{\val}(\term[2]) \mbox{ holds for all \kl{(language) valuations} $\val$ on \kl{$\sig$-algebras} in $\LANG_{n}$}.\]
Notice that \kl{words-to-letters valuations} need an unbounded number of \kl{letters}; so the proof of \Cref{theorem: completeness word} cannot directly apply to the class $\LANG_{n}$.
Nevertheless, for \emph{one-variable} \kl{words} (i.e., \kl{words} over the set $\set{z, z^{\compl}}$ where $z$ is a variable),
we can show completeness theorems (cf.\ \Cref{theorem: completeness word}) of the equational theory over $\LANG_{n}$, as \Cref{theorem: completeness word one variable unary,theorem: completeness word one variable}.
The valuation in the proof of \Cref{theorem: completeness word one variable} is given by the second author.%

For a \kl{word} $\word[1] = x_0 \dots x_{n-1} \in \set{z, z^{-}}^{*}$ and $x \in \set{z, z^{-}}$,
we write $\len{\word[1]}_{x}$ for the number $\#(\set{0 \le i < n \mid x_{i} = x})$.
For a \kl{letter} $a$ and $n \in \nat$, we write $a^{n}$ for the \kl{word} $a \dots a$ of \kl{length} $n$.
\begin{theorem}\label{theorem: completeness word one variable unary}
    For all \kl{words} $\word[1], \word[2] \in \set{z, z^{\compl}}^*$, we have:
    \[\LANG_{1} \models \word[1] = \word[2] \quad \iff \quad \len{\word[1]  }_{z} = \len{\word[2]}_{z} \;\land\; \len{\word[1]}_{z^{\compl}} = \len{\word[2]}_{z^{\compl}}.\]
\end{theorem}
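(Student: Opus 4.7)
The plan is to prove the two directions separately, leveraging commutativity of the monoid $\set{a}^{\kstar}$ under concatenation.

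For the direction $\Longleftarrow$, I would first observe that any \kl{language valuation} $\val$ in $\LANG_{1}$ takes values in $\wp(\set{a}^{\kstar})$ for some \kl{letter} $a$ (the case of the empty underlying set is subsumed by taking $\val(z) \subseteq \set{\eps}$). Since $\set{a}^{\kstar}$ is commutative under concatenation, so is composition of sublanguages, and hence $\hat{\val}(\word)$ for any \kl{word} $\word$ over $\set{z, z^{\compl}}$ is invariant under permutations of the symbols of $\word$. Two \kl{words} that agree on the counts $\len{\bl}_z$ and $\len{\bl}_{z^{\compl}}$ are permutations of one another, so they receive the same interpretation under every $\val$ in $\LANG_{1}$.

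For the contrapositive of $\Longrightarrow$, write $p_i \defeq \len{\word[i]}_z$ and $q_i \defeq \len{\word[i]}_{z^{\compl}}$, and assume $(p_1, q_1) \neq (p_2, q_2)$. My plan is to use two ``extremal'' \kl{valuations} on the \kl{language model} over $\set{a}$. First, set $\val(z) \defeq \set{a^n \mid n \ge 1}$ so that $\val(z^{\compl}) = \set{\eps}$; then $\hat{\val}(\word[i]) = \set{a^n \mid n \ge 1}^{p_i}$, separating by $p$. Symmetrically, set $\val(z) \defeq \set{\eps}$ so that $\val(z^{\compl}) = \set{a^n \mid n \ge 1}$; then $\hat{\val}(\word[i]) = \set{a^n \mid n \ge 1}^{q_i}$, separating by $q$. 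A short induction on $k$ shows $\set{a^n \mid n \ge 1}^{k}$ equals $\set{\eps}$ when $k = 0$ and $\set{a^n \mid n \ge k}$ when $k \ge 1$, so distinct values of $k$ yield distinct languages. Choosing whichever of the two valuations corresponds to the coordinate in which $(p_1, q_1)$ differs from $(p_2, q_2)$ then refutes $\word[1] = \word[2]$ over $\LANG_{1}$.

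The argument is largely mechanical; the only substantive ingredient is the identification of these two complementary valuations. There is no serious obstacle, because the commutativity of $\set{a}^{\kstar}$ collapses the interpretation of a \kl{word} over $\set{z, z^{\compl}}$ to a function of the pair $(p, q)$ alone, and the two valuations above are precisely the ones that read off each coordinate.
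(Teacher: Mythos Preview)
Your proof is correct and follows the same two-step structure as the paper's: commutativity of composition over a one-letter alphabet for $\Longleftarrow$, and a pair of complementary separating valuations for the contrapositive of $\Longrightarrow$. The only difference is cosmetic---the paper takes $\val(z) = \set{a}$ (resp.\ $\val(z) = \set{a}^{\kstar} \setminus \set{a}$) and exhibits the explicit witness $a^{\len{\word[1]}_{z}}$ (resp.\ $a^{\len{\word[1]}_{z^{\compl}}}$), whereas you take $\val(z) = \set{a^n \mid n \ge 1}$ (resp.\ $\val(z) = \set{\eps}$) and compare the resulting languages $\set{a^n \mid n \ge 1}^{p_i}$ as a whole; both choices isolate one count by trivializing the other symbol.
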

\begin{proof}
    $\Longleftarrow$:
    By the following commutative law: for all \kl{language valuations} $\val$ over a set of cardinality at most $1$,
    $\hat{\val}(z z^{\compl}) = \hat{\val}(z^{\compl} z)$.
    $\Longrightarrow$:
    If $\len{\word[1]}_{z} < \len{\word[2]}_{z}$,
    then let $\val$ be the \kl{language valuation} defined by $\val(z) = \set{a}$.
    Then $a^{\len{\word[1]}_{z}} \in \hat{\val}(\word[1]) \setminus \hat{\val}(\word[2])$; thus $\LANG_{1} \not\models \word[1] = \word[2]$.
    If $\len{\word[1]}_{z^{\compl}} < \len{\word[2]}_{z^{\compl}}$,
    then let $\val$ be the \kl{language valuation} defined by $\val(z) = \set{a}^* \setminus \set{a}$.
    Then $a^{\len{\word[1]}_{z^{\compl}}} \in \hat{\val}(\word[1]) \setminus \hat{\val}(\word[2])$.
    If $\len{\word[1]}_{z} > \len{\word[2]}_{z}$ (resp.\ $\len{\word[1]}_{z^{\compl}} > \len{\word[2]}_{z^{\compl}}$), then similarly to the cases above.
\end{proof}

\begin{theorem}\label{theorem: completeness word one variable}
    For all \kl{words} $\word[1], \word[2] \in \set{z, z^{\compl}}^*$, we have:
    \[\LANG_{2} \models \word[1] = \word[2] \quad \iff \quad  \LANG \models \word[1] = \word[2] \quad\iff\quad \word[1] = \word[2].\]
\end{theorem}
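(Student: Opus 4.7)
The rightmost equivalence is \Cref{theorem: completeness word}, and the implication $\LANG \models \word[1] = \word[2] \Rightarrow \LANG_{2} \models \word[1] = \word[2]$ is immediate from $\LANG_{2} \subseteq \LANG$. So the task reduces to the contrapositive of the remaining direction: whenever $\word[1] \neq \word[2]$, some \kl{language valuation} on a \kl{$\sig$-algebra} in $\LANG_{2}$ separates $\word[1]$ from $\word[2]$.

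My first step is a Parikh reduction. If $\len{\word[1]}_{z} \neq \len{\word[2]}_{z}$ or $\len{\word[1]}_{z^{\compl}} \neq \len{\word[2]}_{z^{\compl}}$, then \Cref{theorem: completeness word one variable unary} already supplies a separating one-letter valuation, which is also a valuation in $\LANG_{2}$. Otherwise the two Parikh components coincide; in particular $\len{\word[1]} = \len{\word[2]} = n$, and there is a least index $i^{*}$ at which $\word[1] = x_{0} \dots x_{n-1}$ and $\word[2] = y_{0} \dots y_{n-1}$ disagree. Up to swapping the roles of $\word[1]$ and $\word[2]$ I take $x_{i^{*}} = z$ and $y_{i^{*}} = z^{\compl}$.

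For this same-Parikh case my plan is to compress the $n$-letter words-to-letters construction of \Cref{lemma: completeness word} into the two-letter alphabet $\set{a, b}$ via a prefix-free binary code, for instance $c(\ell_{i}) \defeq a^{i} b$, and take the witness $w \defeq c(\ell_{0}) c(\ell_{1}) \dots c(\ell_{n-1}) = b \compo ab \compo a^{2} b \compo \dots \compo a^{n-1} b$. I would choose $\val(z) \subseteq \set{a, b}^{*}$ so that it contains each codeword $c(\ell_{i})$ exactly when $x_{i} = z$, and excludes the proper infixes of $w$ whose inclusion would enable a factorization witnessing $w \in \hat{\val}(\word[2])$, thereby transporting the positive/negative constraints of \Cref{lemma: completeness word} through the coding $c$. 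With this choice the canonical factorization $w = c(\ell_{0}) \dots c(\ell_{n-1})$ places each factor on the side of $\val(z)$ versus $\val(z^{\compl})$ required by $\word[1]$, and hence immediately witnesses $w \in \hat{\val}(\word[1])$.

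The main obstacle is to show that no $n$-factorization of $w$ witnesses $w \in \hat{\val}(\word[2])$. Factorizations whose $n-1$ boundaries all land between codewords correspond exactly to the words-to-letters setting and are blocked by the argument of \Cref{lemma: completeness word}, with the first-difference index $i^{*}$ forcing the mismatch. The delicate factorizations are those in which some boundary slices inside a codeword $c(\ell_{i})$: these must be ruled out by the additional negative constraints on the relevant sub-codeword infixes of $w$. The heart of the proof is then a consistency check showing that the positive constraints (``$c(\ell_{i}) \in \val(z)$ iff $x_{i} = z$'') and the negative constraints (``certain sub-codeword infixes of $w$ lie outside $\val(z)$'') are jointly satisfiable, in analogy with the consistency argument concluding \Cref{lemma: completeness word}, with the prefix-freeness of $c$ doing the combinatorial bookkeeping.
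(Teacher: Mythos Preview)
Your reductions are fine: the easy directions hold, and the Parikh step via \Cref{theorem: completeness word one variable unary} correctly isolates the same-Parikh case. The gap is entirely in that case. You defer the whole difficulty to a ``consistency check'' that you do not carry out, and the analogy with \Cref{lemma: completeness word} does not carry it. In that lemma the witness $\ell_0\cdots\ell_{n-1}$ lives over $n$ distinct letters, so each infix is determined by its endpoints and the constraints can be imposed independently; the only possible clash is between the two bullet conditions at a single index, dispatched in one line. Over $\set{a,b}$ your witness $W=b\cdot ab\cdot a^{2}b\cdots a^{n-1}b$ has far more infixes than pairs $(i,j)$, the same string recurs at several positions (already $ab$ is both the codeword $c(\ell_1)$ and a suffix of every $c(\ell_j)$ with $j\ge 1$), and---crucially---excluding an infix $u$ from $\val(z)$ is the same as \emph{including} it in $\hat{\val}(z^{\compl})$, so a negative constraint meant to kill one $n$-factorization for $\word[2]$ can simultaneously enable another at a position where $y_j=z^{\compl}$. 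What you actually face is a satisfiability instance (one disjunctive clause per $n$-factorization of $W$, plus the unit clauses coming from the canonical factorization), and you give no argument that it is satisfiable; prefix-freeness of $c$ controls only codeword boundaries, not sub-codeword infixes. Even your claim that aligned factorizations are ``blocked by the argument of \Cref{lemma: completeness word}'' is not yet justified, since you have only specified $\val(z)$ on single codewords, not on concatenations $c(\ell_i)\cdots c(\ell_{j-1})$.

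By contrast, the paper does not try to encode the words-to-letters valuation at all. Writing $\word[1]=z^{c_0}z^{\compl}z^{c_1}\cdots z^{\compl}z^{c_n}$ and $\word[2]=z^{d_0}z^{\compl}z^{d_1}\cdots z^{\compl}z^{d_n}$ (same Parikh vector), it fixes once and for all the valuation $\val(z)=\set{u\in\set{a,b}^{*}\mid u\text{ begins and ends with }a}$ and takes the short witness $a^{p}\,b\,a^{q}$, where $p=\sum_{j\le i}c_j$, $q=\sum_{j>i}c_j$, and $i$ is the first index with $c_i<d_i$. Since every word in $\val(z)$ other than $a$ contains at least two $a$'s while the total number of $a$'s is $\sum_j c_j=\sum_j d_j$, every $z$ in any putative $\word[2]$-factorization must consume exactly the single letter $a$; a one-line count on where the unique $b$ can land then yields the contradiction. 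No per-factorization case analysis and no consistency check are needed.
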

\begin{proof}
    The two $\Longleftarrow$ are clear by definition.
    We prove $\word[1] \neq \word[2] \Longrightarrow \LANG_{2} \not\models \word[1] = \word[2]$.
    By \Cref{theorem: completeness word one variable unary},
    it suffices to show the case when $\len{\word[1]}_{z} = \len{\word[2]}_{z}$ and $\len{\word[1]}_{z^{\compl}} = \len{\word[2]}_{z^{\compl}}$.
    Let $n \defeq \len{\word[1]}_{z^{\compl}} = \len{\word[2]}_{z^{\compl}}$ and let $\word[1], \word[2]$ be as follows:
    \begin{align*}
        \word[1] & = z^{c_0} z^{\compl} z^{c_1} \dots z^{\compl} z^{c_n}  \\
        \word[2] & = z^{d_0} z^{\compl} z^{d_1} \dots z^{\compl} z^{d_n}.
    \end{align*}
    Since $\word[1] \neq \word[2]$,
    there is $i \le n$ such that $c_j = d_j$ for all $j < i$ and $c_i \neq d_i$.
    Without loss of generality, we can assume $c_i < d_i$.
    Now, we consider the following \kl{language valuation} $\val$ over $A \defeq \set{a, b}$:
    \begin{align*}
        \val(z) & \defeq \lang{(a A^*) \cap (A^* a)} = \set{c_0 \dots c_{n-1} \in \set{a, b}^* \mid n \ge 1, c_0 = a, c_{n-1} = a}.
    \end{align*}
    Then $a^{(\sum_{j = 0}^{i} c_j)} b a^{(\sum_{j = i+1}^{n} c_j)} \in \hat{\val}(\word[1])$, as $a \in \hat{\val}(z)$ and $\eps, b \in \hat{\val}(z^{\compl})$.
    Assume, towards contradiction, that $a^{(\sum_{j = 0}^{i} c_j)} b a^{(\sum_{j = i+1}^{n} c_j)} \in \hat{\val}(\word[2])$.
    Each $z$ occurring in $\word[2]$ should map to $a$, as $(\sum_{j=0}^{n} c_j) = (\sum_{j=0}^{n} d_j)$ and every word in $\val(z)$ except for $a$ has at least two occurrences of $a$.
    The $(\sum_{j=0}^{i} c_j)$-th occurrence and $((\sum_{j=0}^{i} c_j) + 1)$-th occurrence of $z^{-}$ are adjacent (since $(\sum_{j=0}^{i} c_j) < (\sum_{j=0}^{i} d_j)$).
    Combining them yields $b \in \hat{\val}(\eps)$, thus reaching a contradiction.
    Hence, $a^{(\sum_{j = 0}^{i} c_j)} b a^{(\sum_{j = i+1}^{n} c_j)} \in \hat{\val}(\word[1]) \setminus \hat{\val}(\word[2])$.
    This completes the proof.
\end{proof}
The proof of \Cref{theorem: completeness word one variable unary,theorem: completeness word one variable} only applies to one-variable words.
We leave open \Cref{theorem: completeness word one variable unary,theorem: completeness word one variable} for many variables words (cf.\ \Cref{theorem: completeness word}).
\section{Conclusion and future work}\label{section: conclusion}
We have introduced \kl{words-to-letters valuations}.
By using them, we have shown the decidability and complexity of the identity/variable/word inclusion problems (\Cref{corollary: identity decidable,corollary: variable decidable,corollary: word decidable}) and the universality problem (\Cref{corollary: universality decidable}) of the equational theory of KA terms with variable complements w.r.t.\ languages; in particular, the inequational theory $\term[1] \le \term[2]$ is coNP-complete when $\term[1]$ does not contain Kleene-star (\Cref{corollary: star-free}).
Additionally, we have proved a completeness theorem for words with variable complements w.r.t.\ languages (\Cref{theorem: completeness word});
moreover, for one-variable words, the equational theory over $\LANG$ coincides with that over $\LANG_{2}$ (\Cref{theorem: completeness word one variable}).

A natural interest is to extend our decidability results, e.g., for full KA terms with variable complements.
As \Cref{corollary: word witness} shows, even for full terms, \kl{words-to-letters valuations} are sufficient valuations in investigating the equational theory.
The first author %
conjectures that the equational theory of KA terms with variable complements is decidable, possibly by combining the technique like \emph{saturable paths} \cite{nakamuraExistentialCalculiRelations2023} (which were introduced for the equational theory w.r.t.\ binary relations).
Additionally, we leave open the (finite) axiomatizability of the equational theory (including that over sets of bounded cardinality; cf.\ \Cref{section: completeness word one variable}).
 \section*{Acknowledgement}
We would like to thank the anonymous reviewers for their useful comments.
This work was supported by JSPS KAKENHI Grant Number JP21K13828 and JST ACT-X Grant Number JPMJAX210B, Japan.
\bibliographystyle{eptcs}
\bibliography{main}

\begin{thebibliography}{10}
\providecommand{\bibitemdeclare}[2]{}
\providecommand{\surnamestart}{}
\providecommand{\surnameend}{}
\providecommand{\urlprefix}{Available at }
\providecommand{\url}[1]{\texttt{#1}}
\providecommand{\href}[2]{\texttt{#2}}
\providecommand{\urlalt}[2]{\href{#1}{#2}}
\providecommand{\doi}[1]{doi:\urlalt{https://doi.org/#1}{#1}}
\providecommand{\eprint}[1]{arXiv:\urlalt{https://arxiv.org/abs/#1}{#1}}
\providecommand{\bibinfo}[2]{#2}

\bibitemdeclare{article}{andrekaEquationalTheoryKleene2011}
\bibitem{andrekaEquationalTheoryKleene2011}
\bibinfo{author}{Hajnal \surnamestart Andréka\surnameend},
  \bibinfo{author}{Szabolcs \surnamestart Mikulás\surnameend} \&
  \bibinfo{author}{István \surnamestart Németi\surnameend}
  (\bibinfo{year}{2011}): \emph{\bibinfo{title}{The equational theory of Kleene
  lattices}}.
\newblock {\slshape \bibinfo{journal}{Theoretical Computer Science}}
  \bibinfo{volume}{412}(\bibinfo{number}{52}), pp. \bibinfo{pages}{7099--7108},
  \doi{10.1016/J.TCS.2011.09.024}.

\bibitemdeclare{article}{bloomNotesEquationalTheories1995}
\bibitem{bloomNotesEquationalTheories1995}
\bibinfo{author}{S.~L. \surnamestart Bloom\surnameend},
  \bibinfo{author}{Z.~\surnamestart Ésik\surnameend} \& \bibinfo{author}{Gh.
  \surnamestart Stefanescu\surnameend} (\bibinfo{year}{1995}):
  \emph{\bibinfo{title}{Notes on equational theories of relations}}.
\newblock {\slshape \bibinfo{journal}{algebra universalis}}
  \bibinfo{volume}{33}(\bibinfo{number}{1}), pp. \bibinfo{pages}{98--126},
  \doi{10.1007/BF01190768}.

\bibitemdeclare{book}{Conway1971}
\bibitem{Conway1971}
\bibinfo{author}{John~H. \surnamestart Conway\surnameend}
  (\bibinfo{year}{1971}): \emph{\bibinfo{title}{Regular Algebra and Finite
  Machines}}.
\newblock \bibinfo{publisher}{Chapman and Hall}.

\bibitemdeclare{inproceedings}{cookComplexityTheoremprovingProcedures1971}
\bibitem{cookComplexityTheoremprovingProcedures1971}
\bibinfo{author}{Stephen~A. \surnamestart Cook\surnameend}
  (\bibinfo{year}{1971}): \emph{\bibinfo{title}{The complexity of
  theorem-proving procedures}}.
\newblock In: {\slshape \bibinfo{booktitle}{STOC}}, \bibinfo{publisher}{ACM},
  p. \bibinfo{pages}{151–158}, \doi{10.1145/800157.805047}.

\bibitemdeclare{article}{Hunt1976}
\bibitem{Hunt1976}
\bibinfo{author}{Harry~B. \surnamestart Hunt~III\surnameend},
  \bibinfo{author}{Daniel~J. \surnamestart Rosenkrantz\surnameend} \&
  \bibinfo{author}{Thomas~G. \surnamestart Szymanski\surnameend}
  (\bibinfo{year}{1976}): \emph{\bibinfo{title}{On the equivalence,
  containment, and covering problems for the regular and context-free
  languages}}.
\newblock {\slshape \bibinfo{journal}{Journal of Computer and System Sciences}}
  \bibinfo{volume}{12}(\bibinfo{number}{2}), pp. \bibinfo{pages}{222--268},
  \doi{10.1016/S0022-0000(76)80038-4}.

\bibitemdeclare{techreport}{kleeneRepresentationEventsNerve1951}
\bibitem{kleeneRepresentationEventsNerve1951}
\bibinfo{author}{S.~C. \surnamestart Kleene\surnameend} (\bibinfo{year}{1951}):
  \emph{\bibinfo{title}{Representation of Events in Nerve Nets and Finite
  Automata}}.
\newblock \bibinfo{type}{Technical Report}, \bibinfo{institution}{RAND
  Corporation}.
\newblock
  \urlprefix\url{https://www.rand.org/pubs/research_memoranda/RM704.html}.

\bibitemdeclare{inproceedings}{kozenKleeneAlgebraTests1996}
\bibitem{kozenKleeneAlgebraTests1996}
\bibinfo{author}{Dexter \surnamestart Kozen\surnameend} \&
  \bibinfo{author}{Frederick \surnamestart Smith\surnameend}
  (\bibinfo{year}{1996}): \emph{\bibinfo{title}{{Kleene} algebra with tests:
  Completeness and decidability}}.
\newblock In: {\slshape \bibinfo{booktitle}{CSL}}, {\slshape
  \bibinfo{series}{LNCS}} \bibinfo{volume}{1258},
  \bibinfo{publisher}{Springer}, pp. \bibinfo{pages}{244--259},
  \doi{10.1007/3-540-63172-0_43}.

\bibitemdeclare{inproceedings}{meyerEquivalenceProblemRegular1972}
\bibitem{meyerEquivalenceProblemRegular1972}
\bibinfo{author}{A.~R. \surnamestart Meyer\surnameend} \&
  \bibinfo{author}{L.~J. \surnamestart Stockmeyer\surnameend}
  (\bibinfo{year}{1972}): \emph{\bibinfo{title}{The equivalence problem for
  regular expressions with squaring requires exponential space}}.
\newblock In: {\slshape \bibinfo{booktitle}{SWAT}}, \bibinfo{publisher}{IEEE},
  pp. \bibinfo{pages}{125--129}, \doi{10.1109/SWAT.1972.29}.

\bibitemdeclare{inproceedings}{nakamuraExistentialCalculiRelations2023}
\bibitem{nakamuraExistentialCalculiRelations2023}
\bibinfo{author}{Yoshiki \surnamestart Nakamura\surnameend}
  (\bibinfo{year}{2023}): \emph{\bibinfo{title}{Existential Calculi of
  Relations with Transitive Closure: Complexity and Edge Saturations}}.
\newblock In: {\slshape \bibinfo{booktitle}{LICS}}, \bibinfo{publisher}{IEEE},
  pp. \bibinfo{pages}{1--13}, \doi{10.1109/LICS56636.2023.10175811}.

\bibitemdeclare{phdthesis}{ngRelationAlgebrasTransitive1984}
\bibitem{ngRelationAlgebrasTransitive1984}
\bibinfo{author}{Kan~Ching \surnamestart Ng\surnameend} (\bibinfo{year}{1984}):
  \emph{\bibinfo{title}{Relation algebras with transitive closure}}.
\newblock Ph.D. thesis, \bibinfo{school}{University of California}.

\bibitemdeclare{inproceedings}{pousCompletenessTheoremsKleene2022}
\bibitem{pousCompletenessTheoremsKleene2022}
\bibinfo{author}{Damien \surnamestart Pous\surnameend} \& \bibinfo{author}{Jana
  \surnamestart Wagemaker\surnameend} (\bibinfo{year}{2022}):
  \emph{\bibinfo{title}{Completeness Theorems for Kleene Algebra with Top}}.
\newblock In: {\slshape \bibinfo{booktitle}{CONCUR}}, \bibinfo{volume}{243},
  \bibinfo{publisher}{Schloss Dagstuhl}, pp. \bibinfo{pages}{26:1--26:18},
  \doi{10.4230/LIPICS.CONCUR.2022.26}.

\bibitemdeclare{inproceedings}{Meyer1973}
\bibitem{Meyer1973}
\bibinfo{author}{Larry~J. \surnamestart Stockmeyer\surnameend} \&
  \bibinfo{author}{Albert~R. \surnamestart Meyer\surnameend}
  (\bibinfo{year}{1973}): \emph{\bibinfo{title}{Word problems requiring
  exponential time (Preliminary Report)}}.
\newblock In: {\slshape \bibinfo{booktitle}{STOC}}, \bibinfo{publisher}{ACM},
  pp. \bibinfo{pages}{1--9}, \doi{10.1145/800125.804029}.

\bibitemdeclare{article}{Tarski1941}
\bibitem{Tarski1941}
\bibinfo{author}{Alfred \surnamestart Tarski\surnameend}
  (\bibinfo{year}{1941}): \emph{\bibinfo{title}{On the Calculus of Relations}}.
\newblock {\slshape \bibinfo{journal}{The Journal of Symbolic Logic}}
  \bibinfo{volume}{6}(\bibinfo{number}{3}), pp. \bibinfo{pages}{73--89},
  \doi{10.2307/2268577}.

\bibitemdeclare{article}{thompsonProgrammingTechniquesRegular1968}
\bibitem{thompsonProgrammingTechniquesRegular1968}
\bibinfo{author}{Ken \surnamestart Thompson\surnameend} (\bibinfo{year}{1968}):
  \emph{\bibinfo{title}{Programming Techniques: Regular expression search
  algorithm}}.
\newblock {\slshape \bibinfo{journal}{Communications of the ACM}}
  \bibinfo{volume}{11}(\bibinfo{number}{6}), pp. \bibinfo{pages}{419--422},
  \doi{10.1145/363347.363387}.

\end{thebibliography}

\appendix
\section{A direct proof of the coincidence between the equational theory w.r.t.\ languages and the language equivalence for KA terms} \label{section: LANG and lang}
(In this section, we use the notations of \Cref{section: preliminaries}.)

We say that a term $\term$ is a \intro*\kl{KA term} if the complement ($\bl^{\compl}$) does not occur in $\term$.
Recall \kl{language valuations} for \kl{languages} in \Cref{section: term to lang}.
\begin{lemma}[cf.\ \Cref{lemma: lang val}]\label{lemma: lang val KA}
    Let $\val$ be a \kl{language valuation}.
    For all \kl{KA terms} $\term$, we have
    \[\hat{\val}(\term) \quad=\quad \hat{\val}(\lang{\term}).\]
\end{lemma}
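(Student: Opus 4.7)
The plan is to prove this by straightforward structural induction on $\term$, essentially mirroring the proof of \Cref{lemma: lang val}. The key observation is that since $\term$ is a \kl{KA term}, the variable-complement case does not arise, and every other syntactic constructor has a corresponding language operation for which $\hat{\val}$ commutes. Concretely, I would first record (or just invoke from the body of the paper) the three distributive identities
\begin{align*}
\hat{\val}(\la[1] \union \la[2]) &= \hat{\val}(\la[1]) \union \hat{\val}(\la[2]), &
\hat{\val}(\la[1] \compo \la[2]) &= \hat{\val}(\la[1]) \compo \hat{\val}(\la[2]), &
\hat{\val}(\la[1]^{\kstar}) &= \hat{\val}(\la[1])^{\kstar},
\end{align*}
which follow from the distributive law of $\compo$ over $\union$ and were already stated before \Cref{lemma: lang val}.

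For the base cases, when $\term = x$ we have $\lang{x} = \set{x}$ so $\hat{\val}(\lang{x}) = \hat{\val}(x)$; when $\term = \eps$ we have $\lang{\eps} = \set{\eps}$ so $\hat{\val}(\lang{\eps}) = \set{\eps} = \hat{\val}(\eps)$; when $\term = \emp$ we have $\lang{\emp} = \emptyset$ and $\hat{\val}(\emptyset) = \emptyset = \hat{\val}(\emp)$. For the inductive steps, e.g.\ $\term = \term[2] \compo \term[3]$, the chain
\[
\hat{\val}(\term[2] \compo \term[3]) = \hat{\val}(\term[2]) \compo \hat{\val}(\term[3]) = \hat{\val}(\lang{\term[2]}) \compo \hat{\val}(\lang{\term[3]}) = \hat{\val}(\lang{\term[2]} \compo \lang{\term[3]}) = \hat{\val}(\lang{\term[2] \compo \term[3]})
\]
uses the induction hypothesis in the middle and the distributive identity on the right; the cases $\union$ and $\bl^{\kstar}$ are entirely analogous.

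There is essentially no obstacle: because \kl{KA terms} exclude $\bl^{\compl}$, we never encounter the mismatch $\lang{x^{\compl}} = \vsig^{\kstar} \setminus \set{x}$ versus $\hat{\val}(x^{\compl}) = X^{\kstar} \setminus \val(x)$ that would break the induction (and is exactly what forces \Cref{lemma: lang val} to use $\lang{\bl}_{\vsig'}$ rather than $\lang{\bl}$). So the proof is a direct copy of the proof of \Cref{lemma: lang val} with the single modification of replacing $\lang{\bl}_{\vsig'}$ by $\lang{\bl}$ and simply omitting the $x^{\compl}$ case.
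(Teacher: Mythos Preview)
Your induction is correct and proves the lemma. The paper, however, takes a shorter route: it simply observes that for \kl{KA terms} one has $\lang{\term} = \lang{\term}_{\vsig'}$ (since the only clause on which the two definitions differ is $x^{\compl}$, which does not occur in a \kl{KA term}), and then invokes \Cref{lemma: lang val} directly to get $\hat{\val}(\term) = \hat{\val}(\lang{\term}_{\vsig'}) = \hat{\val}(\lang{\term})$. So rather than rerunning the induction with $\lang{\bl}_{\vsig'}$ replaced by $\lang{\bl}$, the paper factors the argument through the already-proved lemma via a one-line observation. Your approach is perfectly valid and more self-contained; the paper's is more economical and makes the dependence on \Cref{lemma: lang val} explicit.
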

\begin{proof}
    By \Cref{lemma: lang val}, as $\lang{\term} = \lang{\term}_{\vsig'}$ (since \kl{KA terms} do not contain the complement).
\end{proof}

\begin{theorem}\label{theorem: LANG and lang}
    For all \kl{KA terms} $\term[1], \term[2]$,
    \[\LANG \models \term[1] = \term[2] \quad \iff \quad \lang{\term[1]} = \lang{\term[2]}.\]
\end{theorem}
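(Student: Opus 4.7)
The forward direction is immediate: the standard language interpretation $\lang{\bl}$ arises from the \kl{language valuation} $\val$ over $\vsig$ defined by $\val(x) = \set{x}$, so $\LANG \models \term[1] = \term[2]$ specializes to $\hat{\val}(\term[1]) = \hat{\val}(\term[2])$, which by definition is $\lang{\term[1]} = \lang{\term[2]}$. This is the easy half and I would simply observe it.

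The interesting direction is $\Longleftarrow$, and my plan is to reduce it directly to \Cref{lemma: lang val KA}. Assume $\lang{\term[1]} = \lang{\term[2]}$ and let $\val$ be an arbitrary \kl{language valuation}. By \Cref{lemma: lang val KA}, we have
\[
  \hat{\val}(\term[1]) \;=\; \hat{\val}(\lang{\term[1]}) \;=\; \hat{\val}(\lang{\term[2]}) \;=\; \hat{\val}(\term[2]),
\]
where the outer equalities use the decomposition lemma and the middle equality uses the hypothesis (together with the fact that $\hat{\val}$ on a \kl{language} is defined as a union indexed by its members). Since $\val$ was arbitrary, $\LANG \models \term[1] = \term[2]$.

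There is essentially no obstacle here: the whole content of the theorem has been packaged into \Cref{lemma: lang val KA}, which in turn is an immediate consequence of \Cref{lemma: lang val} once one notes that for \kl{KA terms} $\lang{\term} = \lang{\term}_{\vsig'}$ because the complement never appears. Consequently the proof I would write is just two lines: forward direction by specialization to the canonical valuation, backward direction by chaining the equalities $\hat{\val}(\term[1]) = \hat{\val}(\lang{\term[1]}) = \hat{\val}(\lang{\term[2]}) = \hat{\val}(\term[2])$ through \Cref{lemma: lang val KA}. The only thing worth highlighting is that this gives a proof of $(\dagger)$ (from the introduction) which does not invoke the completeness theorem of Kleene algebra, matching the claim made in \Cref{section: term to lang} of the introduction.
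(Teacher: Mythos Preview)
Your proposal is correct and essentially identical to the paper's own proof: both directions are handled exactly as you describe, with the forward direction by specialization to the canonical valuation and the backward direction by the chain $\hat{\val}(\term[1]) = \hat{\val}(\lang{\term[1]}) = \hat{\val}(\lang{\term[2]}) = \hat{\val}(\term[2])$ via \Cref{lemma: lang val KA}. You even match the paper's remark that this avoids invoking completeness of Kleene algebra.
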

\begin{proof}
    We have
    \begin{align*}
        \LANG \models \term[1] = \term[2] & \quad\Longrightarrow\quad \lang{\term[1]} = \lang{\term[2]} \tag{$\lang{\bl}$ is an instance of \kl{language valuations}}                                 \\
                                          & \quad\Longrightarrow\quad \mbox{for all \kl{language valuations} $\val$}, \hat{\val}(\lang{\term[1]}) = \hat{\val}(\lang{\term[2]})                       \\
                                          & \quad\Longleftrightarrow\quad \mbox{for all \kl{language valuations} $\val$}, \hat{\val}(\term[1]) = \hat{\val}(\term[2]) \tag{\Cref{lemma: lang val KA}} \\
                                          & \quad\Longleftrightarrow\quad  \LANG \models \term[1] = \term[2]. \tag{By definition}
    \end{align*}
\end{proof}

\end{document}